\newtheorem{theorem}{Theorem}
\newtheorem{lemma}{Lemma}
\newtheorem{definition}{Definition}
\newtheorem{proposition}{Proposition}
\theoremstyle{definition}
\newcommand{\dg}{\textrm{\rm dg}}
\DeclareMathOperator{\operatorClassNP}{NP}
\newcommand{\classNP}{\ensuremath{\operatorClassNP}}
\DeclareMathOperator{\operatorClassFPT}{FPT\xspace}
\newcommand{\classFPT}{\ensuremath{\operatorClassFPT}\xspace}
\newlength{\RoundedBoxWidth}
\newsavebox{\GrayRoundedBox}
\newenvironment{GrayBox}[1]%
   {\setlength{\RoundedBoxWidth}{.93\textwidth}
    \def\boxheading{#1}
    \begin{lrbox}{\GrayRoundedBox}
       \begin{minipage}{\RoundedBoxWidth}}%
   {   \end{minipage}
    \end{lrbox}
    \begin{center}
    \begin{tikzpicture}%
       \node(Text)[draw=black!20,fill=white,rounded corners,%
             inner sep=2ex,text width=\RoundedBoxWidth]%
             {\usebox{\GrayRoundedBox}};
        \coordinate(x) at (current bounding box.north west);
        \node [draw=white,rectangle,inner sep=3pt,anchor=north west,fill=white] 
        at ($(x)+(6pt,.75em)$) {\boxheading};
    \end{tikzpicture}
    \end{center}}
\newenvironment{defproblemx}[2][]{\noindent\ignorespaces%
                                \FrameSep=6pt%
                                \parindent=0pt%
                \vspace*{-1.5em}
                \ifthenelse{\isempty{#1}}{%
                  \begin{GrayBox}{\textsc{#2}}%
                }{%
                  \begin{GrayBox}{\textsc{#2} parameterized by~{#1}}%
                }
                \begin{tabular*}{\textwidth}{@{\hspace{.1em}} >{\itshape} p{1.8cm} p{0.8\textwidth} @{}}%
            }{
                \end{tabular*}%
                \end{GrayBox}%
                \ignorespacesafterend
            }
\newcommand{\defproblema}[3]{
  \begin{defproblemx}{#1}
    Input:  & #2 \\
    Task: & #3
  \end{defproblemx}
}%
\newcommand{\Oh}{\mathcal{O}}
\newcommand{\bran}[1]{branchable\xspace}
\newcommand{\pname}{\textsc}
\newcommand{\ProblemFormat}[1]{\pname{#1}}
\newcommand{\ProblemIndex}[1]{\index{problem!\ProblemFormat{#1}}}
\newcommand{\ProblemName}[1]{\ProblemFormat{#1}\ProblemIndex{#1}{}\xspace}
\newcommand{\probLP}{\ProblemName{Longest Path Above Degeneracy}}
\newcommand{\probLC}{\ProblemName{Longest Cycle Above Degeneracy}}
\newcommand{\probSEG}{\ProblemName{Segments with Terminal Set}}
\newcommand{\probESEG}{\ProblemName{Extended Segments with Terminal Set}}
\newcommand{\probKPath}{\ProblemName{Longest Path}}
\newcommand{\probKCycle}{\ProblemName{Longest Cycle}}
 \newcommand{\probstKPath}{\ProblemName{$(s,t)$-Longest Path}}
\begin{document}

\pagestyle{plain}

\title{Going Far From Degeneracy\thanks{The research leading to these results has received funding from the Research Council of Norway via the projects ``CLASSIS'' and ``MULTIVAL".}
}

\author{
Fedor V. Fomin\thanks{
Department of Informatics, University of Bergen, Norway.} \addtocounter{footnote}{-1}
\and
Petr A. Golovach\footnotemark{}
\and 
Daniel Lokshtanov\thanks{Department of Computer Science, University of California Santa Barbara, USA.}
\and 
\addtocounter{footnote}{-2}
Fahad Panolan\footnotemark{}
\and
\addtocounter{footnote}{1}
Saket Saurabh\thanks{Institute of Mathematical Sciences, HBNI, Chennai, India.}
 \and 
 Meirav Zehavi\thanks{Ben-Gurion University, Israel.}
 }

\date{}

\maketitle

\

\maketitle

\begin{abstract}An undirected  graph   $G$
 is \emph{$d$-degenerate} if every subgraph  of $G$  has a vertex of degree at most $d$. By the classical theorem of Erd{\H{o}}s and Gallai from 1959, 
every graph of degeneracy $d>1$ contains a  cycle of length at least $d+1$. The proof of Erd{\H{o}}s and Gallai is constructive and can be turned into a polynomial time algorithm constructing a cycle of  length at least $d+1$. But can we decide in polynomial time whether a graph contains a cycle of length at least $d+2$? An easy reduction from \textsc{Hamiltonian Cycle} provides a negative answer to this question: Deciding whether a graph has a cycle of length at least $d+2$ is NP-complete. 
 Surprisingly, the complexity of the problem changes drastically when the input graph is 2-connected.  In this case  we prove that    deciding whether  $G$ contains a cycle of length at least $d+k$ can be done in time  $2^{\Oh(k)}|V(G)|^{\Oh(1)}$. In other words, deciding whether a $2$-connected  $n$-vertex $G$ contains  a cycle of length at least $d+\log{n}$ can be done in polynomial time.

Similar algorithmic results hold for long paths in  graphs. 
We observe  that deciding whether a graph has a path of length at least $d+1$ is NP-complete. 
 However, we prove that if  graph $G$ is  connected, then deciding whether  $G$ contains a path of length at least $d+k$ can be done in time  $2^{\Oh(k)}n^{\Oh(1)}$.  
We complement these results by showing that the choice of degeneracy as the  ``above guarantee parameterization''   is optimal in the following sense: For any $\varepsilon>0$ it is NP-complete to decide whether a connected (2-connected) graph of degeneracy $d$ has a path (cycle) of length at least $(1+\varepsilon)d$.  
\end{abstract}

\section{Introduction}\label{sec:intro}
The classical theorem of 
Erd{\H{o}}s and Gallai  ~\cite{ErdosG59} says that 
\begin{theorem}[Erd{\H{o}}s and Gallai~\cite{ErdosG59}]\label{thm:lp-mindegA} 
Every graph with $n$ vertices and more than $(n-1)\ell/2$ edges ($\ell \geq 2$) contains a cycle of length at least  $\ell +1$. 
\end{theorem}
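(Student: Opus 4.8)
The plan is to prove the \emph{contrapositive}: if $G$ has no cycle of length at least $\ell+1$ --- that is, its circumference satisfies $\mathrm{circ}(G)\le \ell$ --- then the number of edges satisfies $e(G)\le \ell(n-1)/2$. Both sides behave additively over connected components, and the component bound only helps: a graph with $c$ components satisfies $\sum_i \ell(n_i-1)/2 = \ell(n-c)/2 \le \ell(n-1)/2$. So it suffices to establish the bound for connected graphs, which I would do by induction on $n$, the base cases $n\in\{1,2\}$ being immediate for $\ell\ge 2$.

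First I would dispose of graphs that are not $2$-connected. If a connected $G$ has a cut vertex $v$, then $G = G_1\cup G_2$ with $V(G_1)\cap V(G_2)=\{v\}$ and $E(G_1),E(G_2)$ partitioning $E(G)$; writing $n_i=|V(G_i)|$ we have $n_1+n_2=n+1$, each $G_i$ is connected with circumference at most $\ell$, and the inductive hypothesis gives $e(G)=e(G_1)+e(G_2)\le \ell(n_1-1)/2+\ell(n_2-1)/2 = \ell(n-1)/2$. This reduces the whole statement to the case where $G$ is $2$-connected.

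For $2$-connected $G$ the argument splits on the size of $n$. If $n\le \ell$, then trivially $e(G)\le \binom{n}{2}=n(n-1)/2\le \ell(n-1)/2$. If instead $n>\ell$, I would exhibit a low-degree vertex and delete it. The key structural input is the classical fact (Dirac) that a $2$-connected graph contains a cycle of length at least $\min\{n,2\delta\}$, where $\delta$ is the minimum degree. Since $\mathrm{circ}(G)\le \ell<n$, this forces $2\delta\le \ell$, so some vertex $v$ has $d(v)=\delta\le \ell/2$. Because $G$ is $2$-connected, $G-v$ is still connected, has $n-1$ vertices, and circumference at most $\ell$, so by induction $e(G-v)\le \ell(n-2)/2$; adding back the at most $\ell/2$ deleted edges yields $e(G)\le \ell(n-2)/2+\ell/2 = \ell(n-1)/2$, closing the induction.

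The main obstacle is exactly the $2$-connected input, i.e. the $\min\{n,2\delta\}$ lower bound on the circumference; everything else is bookkeeping. I would prove it by the standard rotation/fan technique: take a longest cycle $C$, and if some vertex lies off $C$, use $2$-connectivity to produce two internally disjoint paths from an off-cycle component to distinct attachment points of $C$, then splice the longer arc of $C$ with these paths to build a strictly longer cycle --- a contradiction unless $C$ already has length at least $2\delta$, the degree bound controlling how the neighbors of a path/cycle endpoint distribute and forbidding the relevant ``crossing'' chords. Obtaining the constant $2\delta$ rather than the easy $\delta+1$ (which holds for \emph{any} graph just from inspecting a longest-path endpoint) is precisely where $2$-connectivity is essential, and is the one place the argument requires genuine care.
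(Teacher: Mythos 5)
The paper does not prove this statement: it is quoted directly from Erd{\H{o}}s and Gallai with a citation, so there is no in-paper argument to compare against. Your proof is the standard textbook derivation and it is correct. Passing to the contrapositive, reducing to connected graphs (the bound $\ell(n-c)/2\le\ell(n-1)/2$ over $c$ components is right), splitting a connected graph at a cut vertex with $n_1+n_2=n+1$ so that the two inductive bounds sum to exactly $\ell(n-1)/2$, and handling the $2$-connected case either by the trivial bound $\binom{n}{2}\le \ell(n-1)/2$ when $n\le\ell$ or by deleting a minimum-degree vertex with $\delta\le\ell/2$ when $n>\ell$ --- each step closes the induction with the exact constant, and $2$-connectivity guarantees that $G-v$ stays connected so the inductive hypothesis applies. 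The only external input is Dirac's $\min\{n,2\delta\}$ circumference bound for $2$-connected graphs, which is precisely the paper's Theorem~\ref{thm:circum} and is likewise cited there without proof; you only sketch the rotation/fan argument for it, so your write-up is complete to the same degree of rigor as the paper's treatment of these classical facts. Two minor points worth making explicit: a $2$-connected graph on at least three vertices automatically has $\delta\ge2$, so the hypothesis of Theorem~\ref{thm:circum} is met, and the degenerate cases are absorbed by your base cases $n\le 2$ together with the observation that $\ell\ge 2$ makes those bases trivial.
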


Recall that a graph $G$ is \emph{$d$-degenerate} if every subgraph $H$ of $G$ has a vertex of degree at most $d$, that is, the minimum degree $\delta(H)\leq d$. Respectively, the \emph{degeneracy} of graph $G$, 
is \(\dg(G)=\max\{\delta(H)\mid H\text{ is a subgraph of }G\}.\)
Since a graph of degeneracy $d$ has a subgraph $H$ with at least $d\cdot |V(H)|/2$ edges, by Theorem~\ref{thm:lp-mindegA},  it contains a cycle of length at least $d+1$. Let us note that the degeneracy of a graph can be computed in polynomial time, see e.g. \cite{MatulaB83},  and thus by Theorem~\ref{thm:lp-mindegA}, deciding whether a graph has a cycle of length at least $d+1$ can be done in polynomial time. In this paper we revisit this classical result from the algorithmic perspective. 

We define the following problem. 

\defproblema{\probLC}%
{A   graph $G$   and a positive
integer $k$.}%
{Decide whether $G$ contains a cycle of length at least $\dg(G)+k$.}

Let us first sketch why   \probLC
  is \classNP{}-complete for $k=2$ even for connected graphs. 
We can reduce \textsc{Hamiltonian Cycle} to \probLC with $k=2$ as follows. For 
a connected non-complete graph $G$ on $n$ vertices, we construct connected graph $H$  
from   $G$   and a complete graph $K_{n-1}$ on $n-1$ vertices as follows. We identify one vertex of   $G$ with one vertex of  $K_{n-1}$.
Thus the obtained graph $H$ has $|V(G)| +n-2$ vertices and is connected; its  degeneracy is $n-2$. Then $H$ has a cycle with $\dg(H)+2=n$ vertices if and only if $G$ has a Hamiltonian cycle. 

Interestingly,  when the input graph  is $2$-connected, the problem becomes fixed-parameter tractable being parameterized by $k$. Let us remind that a connected graph $G$ is (vertex) $2$-connected if for every $v\in V(G)$, $G-v$ is connected. Our first main result is the following theorem.

\begin{theorem}\label{thm:lc}
On   $2$-connected graphs \probLC is  solvable in time $2^{\Oh(k)}\cdot n^{\Oh(1)}$. 
\end{theorem}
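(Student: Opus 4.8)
The plan is to reduce the cycle problem to a fixed-endpoint path problem and then use $2$-connectivity together with the Erd\H{o}s--Gallai bound (Theorem~\ref{thm:lp-mindegA}) inside the path solver. First I would record an exact, connectivity-free reduction to paths: deleting one edge from a cycle of length $\ell$ leaves a path of length $\ell-1$ whose endpoints are adjacent, and conversely, adding the edge $uv$ to any $u$--$v$ path of length $\ell-1\ge 2$ (such a path cannot already contain $uv$, since a simple $u$--$v$ path using $uv$ would be that single edge) yields a cycle of length $\ell$. Hence $G$ contains a cycle of length at least $\dg(G)+k$ if and only if some edge $uv\in E(G)$ admits a $u$--$v$ path of length at least $\dg(G)+k-1$. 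As there are at most $|E(G)|$ edges to try, it suffices to solve, for a fixed adjacent pair $(u,v)$, the fixed-endpoint analogue of \probLP---deciding whether a $u$--$v$ path of length at least $\dg(G)+(k-1)$ exists---in time $2^{\Oh(k)}n^{\Oh(1)}$, and then loop over all edges.

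This fixed-endpoint path problem is the technical heart, and $2$-connectivity is consumed precisely here. The approach I would take is a win/win driven by a degeneracy ordering. Computing such an ordering and applying the constructive form of Theorem~\ref{thm:lp-mindegA} to the densest subgraph (the one realizing $\dg(G)$) produces a ``skeleton'' path of length roughly $\dg(G)$; the remaining task is to augment it by about $k-1$ edges while forcing the two ends to be $u$ and $v$. I would prove a combinatorial lemma asserting that, in a $2$-connected graph, whenever a $u$--$v$ path of length at least $\dg(G)+k-1$ exists there is one whose deviation from such a skeleton is carried by an interface of only $\Oh(k)$ ``branching'' vertices, with $2$-connectivity supplying the internally disjoint detours needed to splice the skeleton and the extra edges into a single $u$--$v$ path. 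Given such a lemma, the bounded interface can be handled by color coding or by representative families over a linear matroid---this is the intended role of the \probSEG machinery---yielding the $2^{\Oh(k)}$ factor; the outer loop over edges keeps the total running time at $2^{\Oh(k)}n^{\Oh(1)}$.

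The main obstacle I expect is exactly this fixed-endpoint subproblem. A naive route---attaching long pendant gadgets at $u$ and $v$ to force the endpoints and then invoking an unconstrained long-path algorithm---fails, because forcing a path to traverse long pendants inflates the excess over $\dg(G)$ from $\Oh(k)$ to $\Omega(n)$, destroying the $2^{\Oh(k)}$ bound; the endpoints must therefore be treated intrinsically. More fundamentally, the fixed-endpoint problem is \classNP{}-hard on merely connected graphs: otherwise the edge-guessing reduction above would make \probLC tractable on connected graphs, contradicting the hardness sketched in the introduction. Any correct argument must thus genuinely use $2$-connectivity, and the crux is to establish the structural lemma bounding the interface by $\Oh(k)$ while verifying that deleting $uv$ and fixing $u,v$ as endpoints preserves both the degeneracy and the $2$-connectivity on which the rerouting relies. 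Once that is in place, the representative-set dynamic programming and the loop over $|E(G)|$ edges are routine.
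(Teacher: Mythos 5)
Your opening reduction is sound as a reduction: guessing an edge $uv$ of the cycle and asking for a $u$--$v$ path of length at least $\dg(G)+k-1$ is a correct reformulation, and your observation that the resulting fixed-endpoint subproblem must be \classNP-hard on merely connected graphs is a valid sanity check. But the reduction does not make the problem easier, and everything that follows is a gap rather than a proof. The subproblem you arrive at is an \emph{above-degeneracy} $(s,t)$-path problem: the standard fixed-endpoint solver (Proposition~\ref{thm:lp-s-t}) runs in time exponential in the target length $\dg(G)+k-1$, not in $k$, so invoking it directly is useless, and an FPT algorithm for this subproblem parameterized by $k$ is at least as hard as the theorem you are trying to prove. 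The entire content of the theorem is thus deferred to your ``combinatorial lemma,'' and the lemma as you state it does not hold up: a longest $u$--$v$ path has no reason to stay within $\Oh(k)$ ``branching vertices'' of a fixed Erd\H{o}s--Gallai skeleton extracted from the densest subgraph --- the witness path may be entirely disjoint from any particular skeleton, and $2$-connectivity does not by itself supply the internally disjoint detours needed to splice $k-1$ extra edges onto a prescribed path with prescribed endpoints.

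The paper's proof rests on two ideas absent from your proposal. First, a win/win on the size of a $d$-core $H$: if $|V(H)|\geq d+k$, then Dirac's theorem (Theorem~\ref{thm:circum}, which gives a cycle of length $\min\{2\delta,n\}$ --- crucially stronger here than the $\delta+1$ bound you use) already yields the cycle, after a contraction argument that restores $2$-connectivity of the core using the $2$-connectivity of $G$. Second, if $|V(H)|<d+k$, the core is nearly complete ($\delta(H)\geq |V(H)|-k$, i.e., every vertex of $H$ has fewer than $k$ non-neighbours in $H$), and the rerouting lemma (Lemma~\ref{lem:segm}) shows that \emph{any} bounded collection of terminal pairs on $V(H)$ can be joined by disjoint paths covering all of $H$. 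This is what reduces the search to finding $\Oh(k)$ vertices \emph{outside} the core organized as a system of segments (Lemma~\ref{lem:lc-seg}), which is then found by color coding; the ``interface'' is the set of segments leaving the core, not a deviation from a skeleton. Without the core-size dichotomy and the near-completeness/rerouting argument, your sketch has no mechanism for absorbing the $\Omega(d)$ portion of the cycle at unit cost, which is the heart of the matter.
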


Similar results can be obtained for paths. Of course, if a graph contains a cycle of length $d+1$, it also contains a simple path on $d+1$ vertices. 
Thus for every   graph $G$ of degeneracy $d$,   deciding whether $G$ contains a path on  $\dg(G)+1$ vertices   can be done in polynomial time. Again, it  is a easy to show that it is \classNP-complete to 
  decide whether $G$ contains a path with  $d+2$ vertices by reduction from \textsc{Hamiltonian Path}. The reduction is very similar to the one we sketched for \probLC. The only difference that this time graph $H$ consists of 
 a disjoint union of $G$ and   $K_{n-1}$.
  The degeneracy of $H$ is $d=n-2$, and $H$ has  a path  with  $d+2=n$ vertices  if and only if $G$ contains a Hamiltonian path. 
 Note that graph $H$ used in the reduction is not connected.  However, when the input graph $G$ is connected, the complexity of the problem change drastically.  
We  define

\defproblema{\probLP}%
{A  graph $G$   and a positive
integer $k$.}%
{Decide whether $G$ contains a path with at least $\dg(G)+k$ vertices.}

The second  main contribution of our paper is the following theorem.
\begin{theorem}\label{thm:lp}
On connected graphs \probLP is solvable in  time $2^{\Oh(k)}\cdot n^{\Oh(1)}$. 
\end{theorem}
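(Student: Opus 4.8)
The plan is to derive \probLP on connected graphs directly from \probLC on $2$-connected graphs (Theorem~\ref{thm:lc}) by a simple, parameter-preserving reduction that adds a single \emph{apex} vertex. Given a connected graph $G$ on $n\ge 2$ vertices with $d=\dg(G)$, I form $G'$ by adding one new vertex $w$ adjacent to every vertex of $G$. The intention is that cycles through $w$ in $G'$ are exactly completions of paths in $G$, so that long paths in $G$ translate into long cycles in $G'$, while the apex raises the degeneracy by precisely one. Feeding $G'$ and the \emph{same} parameter $k$ to Theorem~\ref{thm:lc} then decides the instance in time $2^{\Oh(k)}n^{\Oh(1)}$.

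Three properties need to be checked. First, $G'$ is $2$-connected: it has $n+1\ge 3$ vertices, deleting $w$ leaves the connected graph $G$, and deleting any $v\neq w$ leaves a graph in which $w$ is still adjacent to all remaining vertices, hence connected. Second, $\dg(G')=d+1$. For the upper bound, prepend $w$ to a degeneracy ordering of $G$: each vertex of $G$ gains exactly one earlier neighbor, namely $w$, so it has at most $d+1$ earlier neighbors, while $w$ has none. For the lower bound, take a subgraph $H\subseteq G$ with $\delta(H)=d$; since a vertex of $H$ already has $d$ neighbors in $H$, we have $|V(H)|\ge d+1$, so in $H+w$ every original vertex has degree $\ge d+1$ and $w$ has degree $|V(H)|\ge d+1$, whence $\delta(H+w)\ge d+1$. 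Third, the biconditional: $G$ has a path on at least $d+k$ vertices if and only if $G'$ has a cycle on at least $(d+1)+k$ vertices. For the forward direction, closing a path $p_1\cdots p_m$ of $G$ through $w$ yields a cycle on $m+1$ vertices. For the converse, a cycle $C$ of $G'$ on at least $d+k+1$ vertices either avoids $w$, in which case $C$ lies in $G$ and already contains a path on $\ge d+k+1$ vertices, or passes through $w$, in which case $C-w$ is a path of $G$ on at least $d+k$ vertices, since deleting the degree-two vertex $w$ from a cycle leaves a path.

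Combining the three properties, $G$ is a \yesinstance of \probLP if and only if $G'$ is a \yesinstance of \probLC with the same $k$, because $\dg(G')+k=(d+1)+k$. As $G'$ is $2$-connected and constructible in polynomial time, Theorem~\ref{thm:lc} finishes the proof within the claimed running time. The only remaining items are the degenerate base cases ($n=1$, equivalently $d=0$, where $G$ is a single vertex and the answer is simply whether $k=1$), which are handled directly. The part requiring the most care is not the algorithm but the exact degeneracy bookkeeping $\dg(G')=d+1$ together with the reverse direction of the biconditional: both must be tight so that the parameter is preserved, since slack of even one in the degeneracy shift would misalign the ``above guarantee'' offset and break the equivalence.
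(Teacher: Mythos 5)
Your reduction is correct, and it takes a genuinely different route from the paper. The paper proves Theorem~\ref{thm:lp} directly, mirroring the cycle argument: it finds a $d$-core $H$, invokes the Erd{\H{o}}s--Gallai bound (Theorem~\ref{thm:lp-mindeg}) when $|V(H)|\geq d+k$, and otherwise reduces to \probESEG via the notion of an \emph{extended} system of $T$-segments (Lemma~\ref{lem:lp-eseg}) solved by color coding (the second half of Lemma~\ref{lem:seg}). You instead add a universal apex $w$ and feed $(G+w,k)$ to Theorem~\ref{thm:lc}; I verified the three ingredients you isolate --- $2$-connectivity of $G'$, the exact shift $\dg(G')=d+1$ (your upper bound via the ordering and lower bound via $|V(H)|\geq d+1$ are both sound for $d\geq 1$, and $d=0$ forces $n=1$ for connected $G$), and the biconditional between paths on $\geq d+k$ vertices in $G$ and cycles on $\geq d+k+1$ vertices in $G'$ --- and each holds; there is no circularity since the paper's proof of Theorem~\ref{thm:lc} does not use Theorem~\ref{thm:lp}. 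Note that the paper's remark that ``Theorem~\ref{thm:lc} does not imply Theorem~\ref{thm:lp}'' refers only to running the cycle algorithm on $G$ itself, not to a reduction through an auxiliary $2$-connected graph, so it does not conflict with your argument. What your approach buys is economy: the definition of extended systems of segments, Lemma~\ref{lem:lp-eseg}, and the \probESEG branch of Lemma~\ref{lem:seg} become unnecessary for establishing Theorem~\ref{thm:lp}. What the paper's self-contained proof buys is independence from the cycle machinery (in particular from Dirac's theorem and the block-contraction argument) and a template that exposes where the path and cycle cases genuinely differ; but as a proof of the stated theorem, yours is complete and shorter.
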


Let us remark that Theorem~\ref{thm:lc} does not imply  Theorem~\ref{thm:lp}, because   Theorem~\ref{thm:lc} holds only for 2-connected graphs.

We also show that the parameterization lower bound $\dg(G)$ that is used in Theorems~\ref{thm:lp} and \ref{thm:lc} is tight in some sense. We prove that
for any $0<\varepsilon<1$, it is \classNP-complete to decide whether a connected graph $G$ contains a path with at least $(1+\varepsilon)\dg(G)$ vertices and it is \classNP-complete to decide whether a $2$-connected graph $G$ contains a cycle with at least $(1+\varepsilon)\dg(G)$ vertices.

\medskip\noindent\textbf{Related work.}
\textsc{Hamiltonian Path} and \textsc{Hamiltonian Cycle} problems are among the oldest and most fundamental problems in Graph Theory. In parameterized complexity the following generalizations of these problems, 
\probKPath and \probKCycle, we heavily studied. 
The \probKPath problem is to decide,   for given an $n$-vertex (di)graph $G$ and an integer~$k$,  whether $G$ contains a path of length at least $k$.
 Similarly, the \probKCycle problem is to decide whether $G$  contains a cycle of length at least $k$. 
 There is a plethora of results about parameterized complexity (we refer to the book of Cygan at al.~\cite{cygan2015parameterized} for the introduction to the field) of  \probKPath and \probKCycle (see, e.g., \cite{BjHuKK10,Bodlaender93a,ChenLSZ07,ChenKLMR09,FominLS14,GabowN08,HuffnerWZ08,KneisMRR06,Koutis08,Williams09}) since the early work of Monien~\cite{Monien85}. 
The fastest known randomized
algorithm for \probKPath\ on undirected graph is due to Bj{\"{o}}rklund et al.~\cite{BjHuKK10} and runs
in time $1.657^k \cdot n^{\Oh(1)} $. On the other hand very recently, Tsur  gave the fastest known deterministic algorithm for the problem  running in time $2.554^k \cdot n^{\Oh(1)}$~\cite{DBLP:journals/corr/abs-1808-04185}. Respectively for \probKCycle, the current fastest randomized algorithm  runs in time $4^kn^{\Oh(1)}$ was given by Zehavi in~\cite{Zehavi16} and the best deterministic algorithm constructed by Fomin et al. in~\cite{DBLP:journals/ipl/FominLPSZ18} runs in time $4.884^k n^{\Oh(1)}$.

Our theorems about \probLP and \probLC fits into an  interesting trend in parameterized complexity called 
 ``above guarantee'' parameterization. The general idea of this paradigm is that the natural parameterization of, say,  a maximization problem by the solution size is not satisfactory if there is a lower bound for the solution size that is sufficiently large. For example, there always exists a satisfying assignment that satisfies half of the clauses or there is always a max-cut  containing at least half the edges. Thus nontrivial solutions occur only for the values of the parameter that are above the lower bound. This indicates that for such cases, it is more natural to parameterize the problem by the difference of the solution size and the bound. The first paper about above guarantee parameterization was due to 
Mahajan and Raman~\cite{MahajanR99} who applied this approach to the \textsc{Max Sat} and \textsc{Max Cut} problem.  This approach was 
 successfully applied to various problems, see e.g.~\cite{AlonGKSY10,CrowstonJMPRS13,GargP16,DBLP:journals/mst/GutinKLM11,GutinIMY12,GutinP16,LokshtanovNRRS14,MahajanRS09}.

For \probKPath, the only  successful   above guarantee parameterization known prior to our work was parameterization above shortest path. More precisely, let $s,t$ be vertices of an undirected  graph $G$. Clearly, the length of any $(s,t)$-path in $G$ is lower bounded by the shortest distance, $d(s,t)$, between these vertices. Based on this observation, Bez{\'{a}}kov{\'{a}} et al. in~\cite{BezakovaCDF17} introduced 
the \textsc{Longest Detour} problem that asks, given a graph $G$, two vertices $s,t$, and a positive integer $k$, whether $G$ has an $(s,t)$-path with at least $d(s,t)+k$ vertices. 
They proved that for undirected graphs, this problem can be solved in time $2^{\Oh(k)}n^{\Oh(1)}$. On the other hand, the parameterized complexity of \textsc{Longest Detour} on directed graphs is still open.  
For the variant of the problem where the question is whether $G$ has an $(s,t)$-path with \emph{exactly} $d(s,t)+k$ vertices, a  randomized algorithm with running time $\Oh^*(2.746^k)$ and a deterministic algorithm with running  time $\Oh^*(6.745^k)$ were obtained~\cite{BezakovaCDF17}.  These algorithms work for both undirected and directed graphs.  Parameterization above degeneracy  is ``orthogonal'' to the parameterization above the shortest distance. There are classes of graphs, like planar graphs, that have constant degeneracy and arbitrarily large diameter. On the other hand, there are classes of  graphs, like complete graphs, of constant diameter  and unbounded degeneracy.

\subsection{Our approach}
 Our algorithmic results are based on classical theorems of Dirac~\cite{Dirac52},  and Erd{\H{o}}s and Gallai~\cite{ErdosG59} on the existence of ``long cycle'' and ``long paths''  and can   be seen as  non-trivial algorithmic extensions of these classical theorems. 
  Let $\delta(G)$ be the minimum vertex degree of graph $G$.

\begin{theorem}[Dirac~\cite{Dirac52}]\label{thm:circum} 
Every $n$-vertex $2$-connected graph $G$ with  minimum vertex degree $\delta(G)\geq 2$, contains  a cycle with at least $\min\{2\delta(G),n\}$ vertices.
\end{theorem}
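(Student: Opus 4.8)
The plan is to argue via a longest cycle together with its maximality, upgrading the easy ``cycle of length at least $\delta(G)+1$'' bound to $2\delta(G)$ by exploiting $2$-connectivity. Write $\delta=\delta(G)$. Since $\delta\ge 2$, the graph contains a cycle, so let $C$ be a longest cycle and set $m=|V(C)|$. If $C$ is Hamiltonian, then $m=n\ge\min\{2\delta,n\}$ and we are done, so assume $m<n$. It then suffices to prove $m\ge 2\delta$, because this gives $m\ge\min\{2\delta,n\}$ as well. As $m<n$, some vertex lies off $C$; I would fix a connected component $H$ of $G-V(C)$ and let $X=N(H)\cap V(C)$ be its attachment set. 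Since $G$ is $2$-connected, no single vertex separates $H$ from the rest of the graph, so $|X|\ge 2$ (a single attachment would be a cut vertex, using that $C$ has at least three vertices).

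The engine of the proof is the maximality of $C$ applied to \emph{detours} through $H$. First I would show: if $R$ is any path whose endpoints $a,b$ lie in $X$, whose internal vertices all lie in $H$, and which has exactly $p\ge 1$ internal vertices, then $a$ and $b$ split $C$ into two arcs, and replacing either arc by $R$ yields a cycle. Since $C$ is longest, each such cycle has at most $m$ edges, which forces both arcs to have at least $p+1$ edges, and hence $m\ge 2(p+1)$. Specializing to the shortest possible detour between cyclically consecutive attachment vertices shows that consecutive elements of $X$ are at distance at least $2$ on $C$, giving the baseline estimate $m\ge 2|X|$. Both of these inequalities are clean consequences of $2$-connectivity (which supplies two distinct endpoints on $C$) and of the optimality of $C$.

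The hard part will be the degree-counting step that produces a detour long enough to reach $2\delta$, namely showing that $H$ admits a $C$-path with at least $\delta-1$ internal vertices, so that $m\ge 2(p+1)\ge 2\delta$. The extremal configuration---a $2\delta$-cycle with one leftover vertex joined to $\delta$ of its vertices---shows this is exactly tight and indicates two regimes. When $H$ is a single vertex $v$, all of its $\ge\delta$ neighbors lie in $X$, so $|X|\ge\delta$ and the baseline bound already gives $m\ge 2\delta$. In general, each vertex of $H$ has at most $|X|$ neighbors on $C$, so $G[H]$ has minimum degree at least $\delta-|X|$, and I would combine a long path inside $H$ with attachments at its two ends to manufacture a long $C$-path, playing this off against $m\ge 2|X|$. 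The genuine difficulty is that the two obvious estimates, $m\ge 2(p+1)$ from the interior length and $m\ge 2|X|$ from the attachment spread, only combine to give roughly $\delta$ rather than $2\delta$; the delicate point is to make the interior length and the spread of the endpoints' neighborhoods contribute \emph{additively} to the same arcs. This is precisely where both the minimum-degree hypothesis and $2$-connectivity must be used simultaneously, and without $2$-connectivity the conclusion degrades to $\delta+1$ (as two cliques sharing a single vertex demonstrate).
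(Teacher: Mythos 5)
First, note that the paper does not prove this statement at all: it is Dirac's classical 1952 theorem, quoted with a citation and used as a black box, so there is no internal proof to compare against. Your submission therefore has to stand on its own as a proof, and it does not yet: it is a correct and well-motivated plan whose central step is announced but never executed. The parts you do prove are fine --- taking a longest cycle $C$ with $m=|V(C)|$, reducing to $m\ge 2\delta$ when $m<n$, fixing a component $H$ of $G-V(C)$ with attachment set $X$, deriving $|X|\ge 2$ from $2$-connectivity, and establishing the two maximality inequalities $m\ge 2(p+1)$ for any $C$-path through $H$ with $p$ internal vertices and $m\ge 2|X|$. The one-vertex case of $H$ is also correctly dispatched via $|X|\ge\delta$.

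The gap is exactly where you say ``the hard part will be'' and ``the delicate point is'': you never supply the argument that makes the interior length of a path in $H$ and the spread of its endpoints' neighborhoods on $C$ contribute additively to the same pair of arcs. Worse, the intermediate target you name --- that $H$ admits a $C$-path with at least $\delta-1$ internal vertices --- is false in general: your own extremal configuration (a $2\delta$-cycle plus one vertex joined to $\delta$ of its vertices) refutes it for every $\delta\ge 3$, since there the longest $C$-path through $H$ has a single internal vertex. So the theorem cannot be reduced to that claim; what is actually needed is a combined count of the form $m\ge 2(q+1)+2(\delta-q-1)$, where $q+1$ is the length of a longest path $w_0\cdots w_q$ inside $H$, using that $w_0$ and $w_q$ each have at least $\delta-q$ neighbors on $C$ and that no neighbor of $w_0$ on $C$ can be too close (along either arc) to a neighbor of $w_q$. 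Carrying this out requires an interleaving/crossover argument on the cyclic order of $N_C(w_0)$ and $N_C(w_q)$ (in the spirit of the standard proof via longest paths and the sets $S,T$ of predecessor/successor indices), and that argument is the substance of Dirac's theorem. Until it is written down, the proposal is a correct skeleton with the load-bearing bone missing.
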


\begin{theorem}[Erd{\H{o}}s and Gallai~\cite{ErdosG59}]\label{thm:lp-mindeg} 
Every connected $n$-vertex graph $G$ contains a path with at least  $\min\{2\delta(G)+1,n\}$ vertices.
\end{theorem}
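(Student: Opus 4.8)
The plan is to run the classical longest-path argument, combining maximality with a rotation/crossing step to force either a Hamiltonian path or a long one. Write $\delta = \delta(G)$ and let $P = v_0 v_1 \cdots v_\ell$ be a longest path in $G$, so $P$ has $\ell+1$ vertices; the cases $n=1$ (hence $\delta=0$) are trivial, so assume $\delta \geq 1$. If $\ell+1 \geq \min\{2\delta+1,n\}$ we are already done, so I would assume for contradiction that both $\ell+1 < 2\delta+1$ (i.e.\ $\ell < 2\delta$) and $\ell+1 < n$, and derive a contradiction.

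First I would record the standard consequence of maximality: every neighbour of an endpoint of $P$ must lie on $P$, since otherwise $P$ could be extended. Hence, setting $S = \{\,i : v_0 v_i \in E(G)\,\} \subseteq \{1,\dots,\ell\}$ and $T = \{\,i : v_\ell v_i \in E(G)\,\} \subseteq \{0,\dots,\ell-1\}$, we have $|S| = \deg(v_0) \geq \delta$ and $|T| = \deg(v_\ell) \geq \delta$.

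The heart of the argument is a pigeonhole/crossing step. Shift $T$ by one, putting $T' = \{\,i+1 : i \in T\,\} \subseteq \{1,\dots,\ell\}$. Then $S$ and $T'$ are both contained in the $\ell$-element set $\{1,\dots,\ell\}$, and $|S| + |T'| \geq 2\delta > \ell$, so they share an index $i$. This means $v_0 \sim v_i$ and $v_\ell \sim v_{i-1}$, which lets me close up $P$ into the cycle $v_0 v_i v_{i+1} \cdots v_\ell v_{i-1} v_{i-2} \cdots v_1 v_0$ passing through every vertex of $P$ (all edges are either path edges or one of the two chords, and one checks each vertex $v_0,\dots,v_\ell$ is visited exactly once).

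Finally I would cash in connectivity. Since $\ell+1 < n$, the set $V(P)$ is a proper nonempty subset of $V(G)$, so by connectedness there is an edge $u v_j$ with $u \notin V(P)$ and $v_j \in V(P)$. Opening the cycle $C$ at $v_j$ gives a Hamiltonian path of $C$ starting at $v_j$, and prepending $u$ yields a path on $\ell+2$ vertices, contradicting the maximality of $P$. Thus $\ell+1 \geq \min\{2\delta+1,n\}$, as claimed. The only delicate point is the crossing step: one must shift the neighbour set of exactly one endpoint so that a common index produces a genuine cycle covering all of $V(P)$; once that is arranged, the extension via connectivity is routine bookkeeping.
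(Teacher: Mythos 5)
Your proof is correct, and it is the standard crossing/rotation argument for this classical result: the paper itself gives no proof of Theorem~\ref{thm:lp-mindeg}, citing Erd{\H{o}}s and Gallai instead, so there is nothing to diverge from. The key steps all check out: maximality of $P$ confines the neighbourhoods of $v_0$ and $v_\ell$ to $V(P)$, the shift $T'=\{i+1 : i\in T\}$ together with $|S|+|T'|\geq 2\delta>\ell$ forces a common index $i$ yielding the spanning cycle $v_0v_iv_{i+1}\cdots v_\ell v_{i-1}\cdots v_1v_0$ on $V(P)$, and connectivity plus $|V(P)|<n$ then produces a longer path, contradicting maximality. A self-contained proof like this is a reasonable addition where the paper merely cites the literature.
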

Theorem~\ref{thm:circum} is used to prove Theorem~\ref{thm:lc}  and Theorem~\ref{thm:lp-mindeg} is used  to prove 
Theorem~\ref{thm:lp}.

We give a high-level overview of the ideas used to prove  Theorem~\ref{thm:lc}. The ideas behind the proof of  Theorem~\ref{thm:lp} are similar.  Let $G$ be a 2-connected  graph of degeneracy $d$. If $d=\Oh(k)$, we can solve  \probLC in time $2^{\Oh(k)}\cdot n^{\Oh(1)}$ by making of use one of the algorithms for  
\probKCycle. 
Assume from now that $d\geq c\cdot k$ for some constant $c$, which will be specified in the proof. Then we find a $d$-core $H$ of $G$ (a connected subgraph of $G$ with the minimum vertex degree at least $d$). This  can be done in linear time by one of the known algorithms, see e.g.~\cite{MatulaB83}. If  the size of $H$ is sufficiently large, say $|V(H)|\geq d+k$,  we use    Theorem~\ref{thm:circum} to conclude that $H$ contains a cycle  with at least $|V(H)|\geq d+k$ vertices.

The most interesting case occurs when $|V(H)|< d+k$. Suppose that $G$ has a cycle of length at least $d+k$. It is possible to prove that then there is also a cycle of length at leat $d+k$ that it hits the core $H$. We do not know how many times and in which vertices of $H$ this cycle enters and leaves $H$, but we can guess these terminal points.   The interesting property of the core $H$ is that, loosely speaking, for any ``small'' set   of terminal points, inside $H$ the cycle can be rerouted in a such way that it will contain  all vertices of $H$.  

A bit more formally, we prove the following structural result. We define a system of segments   in $G$ with respect to $V(H)$, which is a family of internally vertex-disjoint paths $\{P_1,\ldots,P_r\}$ in $G$ (see Figure~\ref{fig:segm}). Moreover, for every $1\leq i\leq r$, every path $P_i$ has at least $3$ vertices, its endpoints are in $V(H)$ and  all internal vertices of $P_i$ are in $V(G)\setminus V(H)$.  Also the union of all the segments is a forest with every connected component being a path. 

\begin{figure}[h]
\begin{center}
\scalebox{0.8}{\input{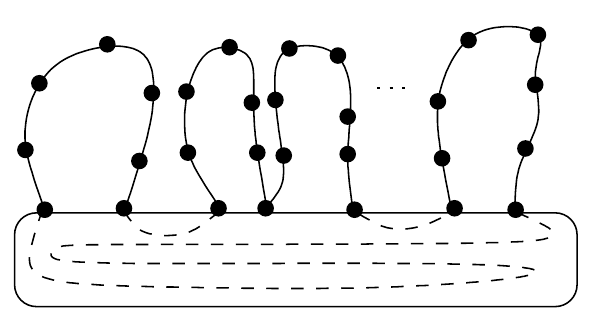_t}}
\end{center}
\caption{Reducing \probLC to finding a system of segments $P_1,\ldots,P_r$; complementing the segments into a cycle is shown by dashed lines.}\label{fig:segm}
\end{figure}

We prove that  $G$ contains a cycle of length at least $k+d$ if and only if
 \begin{itemize}
 \setlength{\itemsep}{-2pt}
\item  either  there is a path of length at least $k+d-|V(H)|$ with endpoints in $V(H)$ and all internal vertices outside $H$, or
\item there is a system of segments    with respect to $V(H)$ such that the total number  of vertices  outside $H$  used by the paths of the system, is within the interval  
 $[k+d-|V(H)|, 2\cdot (k+d-|V(H)|)]$.
\end{itemize}
The proof of this structural result is built on Lemma~\ref{lem:segm}, which describes   the possibility of  routing  in graphs of large minimal degree. The crucial property is that we can complement 
any system of segments of bounded size by segments inside the core $H$ to obtain a cycle that contains all the vertices of $H$ as is shown in Figure~\ref{fig:segm}. 

Since $|V(H)|>d$, the problem of finding a  cycle of length at least $k+d$ in $G$ boils down to one of the following tasks. Either  find a path of length $c'\cdot k$ with all internal vertices outside $H$, or find a system of segments with respect to $V(H)$ such that the total number of vertices used by the paths of the system is    $c''\cdot k$, here  $c'$ and $c''$ are the constants to be specified in the proof. 
 In the first case, we can use one of the known algorithms to find in time  $2^{\Oh(k)}\cdot n^{\Oh(1)}$ such a long path. In the second case, we   can use 
 color-coding to solve the problem.

\medskip\noindent\textbf{Organization of the paper.} In Section~\ref{sec:defs} we give basic definitions and state some known fundamental results. Sections~\ref{sec:techn}--\ref{sec:lp} contain the proof of Theorems~\ref{thm:lp} and \ref{thm:lc}. In Section~\ref{sec:techn} we state structural results that we need for the proofs and  
in Section~\ref{sec:lp} we complete the proofs. 
In Section~\ref{sec:hardnees}, we give the complexity lower bounds for our algorithmic results.
We conclude the paper in Section~\ref{sec:concl}  by stating some open problems. 

\section{Preliminaries}\label{sec:defs}
 We consider only finite undirected graphs. 
 For  a graph $G$, we use $V(G)$ and $E(G)$ to denote its vertex set and edge set, respectively. 
 Throughout the paper we use $n=\vert V(G)\vert$ and $m=\vert E(G)\vert$. 
For a graph $G$ and a subset $U\subseteq V(G)$ of vertices, we write $G[U]$ to denote the subgraph of $G$ induced by $U$. 
We write $G-U$ to denote the graph $G[V(G)\setminus U]$; for a single-element set $U=\{u\}$, we write $G-u$.
For a vertex $v$, we denote by $N_G(v)$ the \emph{(open) neighborhood} of $v$, i.e., the set of vertices that are adjacent to $v$ in $G$.
For a set $U\subseteq V(G)$, $N_G(U)=(\bigcup_{v\in S}N_G(v))\setminus S$. 
The \emph{degree} of a vertex $v$ is $d_G(v)=|N_G(v)|$. The \emph{minimum degree} of $G$ is $\delta(G)=\min\{d_G(v)\mid v\in V(G)\}$.
 A \emph{$d$-core} of $G$ is an inclusion maximal induced connected subgraph $H$ with  $\delta(H)\geq d$. 
Every graph of degeneracy at least $d$ contains a $d$-core and that can be found in linear time (see~\cite{MatulaB83}).
A vertex $u$ of a connected graph $G$ with at least two vertices is a \emph{cut vertex} if $G-u$ is disconnected. 
A connected graph $G$ is \emph{$2$-connected} if it has no cut vertices. An inclusion maximal induced 2-connected subgraph of $G$ is called a \emph{biconnected component} or \emph{block}. Let $\mathcal{B}$ be the set of blocks of a connected graph $G$ and let $C$ be the set of cut vertices. Consider the bipartite graph $Block(G)$ with the vertex set $\mathcal{B}\cup C$, where $(\mathcal{B},C)$ is the bipartition, such that $B\in \mathcal{B}$ and $c\in C$ are adjacent if and only if $c\in V(B)$. The block graph of a connected graph is always a tree (see~\cite{Harary62}). 

 A path in a graph is a self-avoiding walk. Thus no vertex appears in a path more than once.  
 A cycle is a closed self-avoiding walk .
For a path $P$ with end-vertices $s$ and $t$, we say that the vertices of $V(P)\setminus\{s,t\}$ are \emph{internal}. We say that $G$ is a \emph{linear forest} if each component of $G$ is a path. The \emph{contraction} of an edge $xy$ is the operation that removes the vertices $x$ and $y$ together with the incident edges and replaces them by a vertex $u_{xy}$ that is adjacent to the vertices of $N_G(\{x,y\})$ of the original graph. If $H$ is obtained from $G$ by contracting some edges, then $H$ is a \emph{contraction} of $G$.

We  summarize below some known algorithmic 
results 
which will be used as subroutines by our algorithm. 

\begin{proposition}\label{thm:lp-lc} 
\probKPath and \probKCycle 
  are solvable in time $2^{\Oh(k)}\cdot n^{\Oh(1)}$.
\end{proposition}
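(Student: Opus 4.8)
The plan is to obtain both algorithms from the color-coding technique of Alon, Yuster, and Zwick, which is the most transparent route to a $2^{\Oh(k)}\cdot n^{\Oh(1)}$ running time. First consider \probKPath. Since any path on more than $k$ vertices contains a subpath on exactly $k$ vertices, deciding whether $G$ has a path on at least $k$ vertices is equivalent to deciding whether $G$ has a path on exactly $k$ vertices. I would color $V(G)$ uniformly at random with $k$ colors and search for a \emph{colorful} path, that is, a path on $k$ vertices receiving $k$ pairwise distinct colors. Such a path can be found by a dynamic program over states $(v,S)$, where $v$ is the current endpoint and $S\subseteq\{1,\dots,k\}$ is the set of colors already used; this runs in time $2^{k}\cdot n^{\Oh(1)}$. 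A fixed path on $k$ vertices becomes colorful with probability $k!/k^{k}\ge e^{-k}$, so repeating the experiment $\Oh(e^{k})$ times detects a $k$-path with constant probability, giving a randomized $2^{\Oh(k)}\cdot n^{\Oh(1)}$ algorithm. To derandomize, I would replace the random colorings by an $(n,k)$-perfect hash family of size $2^{\Oh(k)}\log n$ (Naor, Schulman, Srinivasan): by definition some coloring in the family is colorful on the vertex set of any fixed target path, so trying all colorings of the family and running the dynamic program for each yields a deterministic algorithm within the claimed bound.

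For \probKCycle I would reduce the search for a long cycle to path-finding. The key observation is that $G$ contains a cycle on at least $k$ vertices if and only if there is a path $P=v_1v_2\cdots v_k$ on exactly $k$ vertices together with a $v_k$--$v_1$ path whose internal vertices avoid $\{v_2,\dots,v_{k-1}\}$; equivalently, $v_1$ and $v_k$ lie in the same connected component of $G-\{v_2,\dots,v_{k-1}\}$. Indeed, given any cycle $C$ with $|V(C)|\ge k$, picking $k$ consecutive vertices of $C$ yields such a path $P$, while the complementary arc of $C$ is the required $v_k$--$v_1$ path; conversely the union of $P$ with any such returning path is a cycle on at least $k$ vertices. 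Thus it suffices to detect a colorful path on $k$ vertices whose endpoints can be reconnected through the rest of the graph while avoiding the path's interior, and the colorful path itself is found by the same dynamic program as above.

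The main obstacle is coordinating this reconnection test with the color-coding dynamic program: the connectivity of the endpoints in $G-\{v_2,\dots,v_{k-1}\}$ depends on the interior of the very path being constructed, whereas the standard dynamic program records only one endpoint together with the used color set. I would resolve this by strengthening the state to fix both endpoints of the partial path (fixing the start vertex $v_1$ and running the dynamic program from each of the $n$ choices, or enumerating endpoint pairs), so that upon completing a colorful $k$-path I can perform a single auxiliary connectivity check in $G$ minus its interior; this keeps the running time at $2^{\Oh(k)}\cdot n^{\Oh(1)}$. Alternatively, both statements follow directly from the representative-family and algebraic algorithms for \probKPath and \probKCycle cited above, which handle the ``at least $k$'' requirement natively; either route establishes the proposition.
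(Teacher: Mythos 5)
The paper does not actually prove this proposition: it is stated as a summary of known results and discharged by citation (Bj{\"o}rklund et al.\ and Tsur for \probKPath; Zehavi and Fomin et al.\ for \probKCycle). Your closing sentence, which falls back on citing those algorithms, is therefore exactly the paper's route and legitimately establishes the statement. Your self-contained color-coding argument for \probKPath is also correct and standard: a path on at least $k$ vertices contains one on exactly $k$ vertices, so the Alon--Yuster--Zwick scheme with perfect hash families applies verbatim.

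Your self-contained argument for \probKCycle, however, has a genuine gap. The combinatorial characterization is right (a cycle on at least $k$ vertices exists if and only if there is a $k$-vertex path $P=v_1\cdots v_k$ with $v_1$ and $v_k$ in the same component of $G-\{v_2,\dots,v_{k-1}\}$), and you correctly identify the obstacle, but the proposed fix does not overcome it. A DP state recording both endpoints and the set of used colors does not determine the \emph{interior vertex set} of the colorful path it certifies: two colorful $(v_1,v_k)$-paths with the same color set can have different interiors, and the reconnection test depends on which interior is deleted. Hence ``a single auxiliary connectivity check in $G$ minus its interior'' is not well defined from the DP table --- the table entry may be true only because of witnesses whose interiors separate $v_1$ from $v_k$, even though some other witness (with the same endpoints and colors) does admit a returning path. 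This is precisely why the ``at least $k$'' version of \probKCycle is genuinely harder than finding a cycle of length exactly $k$. The known $2^{\Oh(k)}\cdot n^{\Oh(1)}$ algorithms resolve it either algebraically or by maintaining, for each state, a \emph{representative family} of partial paths such that if any witness can be completed by a returning path avoiding its interior, then so can some member of the stored family. Without such a tool your dynamic program proves only the path half of the proposition; for the cycle half you must fall back on the cited results, which is what the paper itself does.
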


We also need the result about the variant of  \probKPath with fixed  end-vertices. 
  In the \probstKPath, we are    given two vertices $s$ and $t$ of a graph $G$ and a positive integer $k$. The task is to decide,  whether $G$ has an $(s,t)$-path with at least $k$ vertices. Using the results of Bez{\'{a}}kov{\'{a} et al.~\cite{BezakovaCDF16}, we immediately obtain the following.

\begin{proposition}\label{thm:lp-s-t} 
\probstKPath is solvable in time $2^{\Oh(k)}\cdot n^{\Oh(1)}$.
\end{proposition}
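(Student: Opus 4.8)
The plan is to reduce \probstKPath to the \textsc{Longest Detour} problem of Bezáková et al.~\cite{BezakovaCDF16}, which asks whether $G$ has an $(s,t)$-path on at least $d(s,t)+k'$ vertices and which is solvable on undirected graphs in time $2^{\Oh(k')}\cdot n^{\Oh(1)}$. The conceptual obstacle to overcome is that the usual ``a path on at least $k$ vertices yields one on exactly $k$ vertices'' shortcut \emph{fails} here: a subpath of an $(s,t)$-path need not be an $(s,t)$-path, so we cannot simply guess a target length and color-code. The key idea is instead to use the shortest $(s,t)$-path as a free baseline and to charge only the \emph{excess} length to the expensive subroutine.

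First I would assume $s\neq t$ (otherwise the instance is handled trivially) and compute $d=d(s,t)$ in linear time by breadth-first search; the shortest $(s,t)$-path then has exactly $d+1$ vertices. I then distinguish two cases according to how $k$ compares with $d+1$. If $k\leq d+1$, the shortest path is itself an $(s,t)$-path on at least $k$ vertices, so the instance is a \yesinstance and we report this immediately. Otherwise $k\geq d+2$, and I set $k'=k-d$, so that $2\leq k'\leq k$. In this regime an $(s,t)$-path on at least $k$ vertices is, by the substitution $k=d+k'$, exactly an $(s,t)$-path on at least $d(s,t)+k'$ vertices, i.e.\ a detour of excess $k'$ over the shortest path. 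I therefore invoke the \textsc{Longest Detour} algorithm with parameter $k'$ and return its verdict unchanged.

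It remains only to verify the running time and the correctness of the substitution. Since $k'=k-d(s,t)\leq k$, the \textsc{Longest Detour} subroutine runs in time $2^{\Oh(k')}\cdot n^{\Oh(1)}=2^{\Oh(k)}\cdot n^{\Oh(1)}$, and the preprocessing (one breadth-first search and a comparison) is polynomial, giving the claimed bound. Correctness is immediate from the equivalence ``at least $k$ vertices'' $\Longleftrightarrow$ ``at least $d(s,t)+k'$ vertices'' under $k'=k-d(s,t)$, combined with the observation that in the first case the shortest path already certifies a \yesinstance. I expect no genuine difficulty beyond this routine bookkeeping, as the entire algorithmic weight of the problem is discharged onto the \textsc{Longest Detour} result, and the only thing to check carefully is that the excess parameter $k'$ never exceeds $k$ so that the exponential factor stays $2^{\Oh(k)}$.
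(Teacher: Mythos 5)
Your proposal is correct and is essentially the paper's own approach: the paper offers no argument beyond citing Bez{\'a}kov{\'a} et al.\ and declaring the proposition immediate, and your two-case reduction (report \emph{yes} when $k\leq d(s,t)+1$; otherwise invoke \textsc{Longest Detour} with excess parameter $k'=k-d(s,t)\leq k$) is exactly that immediate derivation written out, including the correct observation that truncation does not work for paths with prescribed endpoints. The only detail to add is the degenerate case where $t$ is unreachable from $s$ (so $d(s,t)$ is undefined and the answer is \emph{no}), which the breadth-first search you already run detects.
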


\section{Segments and rerouting}\label{sec:techn}
In this section we define systems of segments and prove structural results about them. These combinatorial results are crucial for our algorithms for \probLP and \probLC. 

The following rerouting lemma is crucial for our algorithms. 

\begin{lemma}\label{lem:segm}
Let $G$ be an $n$-vertex graph and  $k$ be a positive integer such that
$\delta(G) \geq \max\{5k-3, n-k\}$.
Let  $\{s_1,t_1\},\ldots,\{s_r,t_r\}$, $r\leq k$, be a collection of pairs of vertices of $G$ such that
$(i)$ $s_i\notin \{s_j,t_j\}$ for all $i\neq j$,  $i,j\in \{1, \dots, r\}$, and
$(ii)$ there is at least one index $i\in \{1, \dots, r\}$ such that  $s_i\neq t_i$.
Then there is a family of pairwise vertex-disjoint paths $\mathcal{P}=\{P_1,\ldots,P_r\}$ in $G$ such that each $P_i$ is an $(s_i,t_i)$-path and  $\bigcup_{i=1}^r V(P_i)=V(G)$, that is, the paths cover all vertices of $G$. 

\end{lemma}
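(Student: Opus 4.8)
The plan is to run an extremal rerouting argument. Among all families $\mathcal{P}=\{P_1,\dots,P_r\}$ of pairwise vertex-disjoint paths with each $P_i$ an $(s_i,t_i)$-path, I would pick one maximising the number of covered vertices $N:=|\bigcup_i V(P_i)|$ and prove that $N=n$. Two easy consequences of the hypotheses are used throughout: since $\delta(G)\le n-1$ and $\delta(G)\ge 5k-3$ we have $n\ge 5k-2$, and since $\delta(G)\ge n-k$ every vertex misses at most $k-1$ other vertices. By $(i)$ the starting points are pairwise distinct and distinct from the remaining terminals, so (all $2r$ terminals being distinct apart from the allowed coincidences $s_i=t_i$) such a family exists at all: route the pairs greedily, using the edge $s_it_i$ when it lies in $E(G)$ and otherwise a fresh common neighbour of $s_i$ and $t_i$, which is available because at each step the $\Oh(k)$ forbidden vertices are far outnumbered by the $\ge n-2(k-1)$ common neighbours of $s_i$ and $t_i$.

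Fix a coverage-maximal $\mathcal{P}$ and suppose $W:=V(G)\setminus\bigcup_iV(P_i)\ne\emptyset$; fix $u\in W$. The first move is single reinsertion: if some edge $xy$ of some $P_i$ has $x,y\in N(u)$, then replacing $xy$ by the subpath $x\,u\,y$ produces a valid family covering one more vertex, contradicting maximality. So I may assume no such edge exists, i.e.\ every path-edge has an endpoint among the at most $k-1$ non-neighbours of $u$. Since each vertex lies on at most two path-edges, the paths have at most $2(k-1)$ edges in total; as this number equals $N-r$, I obtain $N\le r+2(k-1)\le 3k-2$, whence $|W|=n-N\ge(5k-2)-(3k-2)=2k$. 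Moreover each $w\in W$ misses at most $k-1$ vertices, so $\delta(G[W])\ge(|W|-1)-(k-1)=|W|-k\ge|W|/2$.

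The second move absorbs all of $W$ in one step. By $(ii)$ some $P_i$ is non-trivial; fix one of its edges $xy$. Because $x$ and $y$ each miss at most $k-1$ vertices and $|W|\ge 2k$, I can choose distinct $a,b\in W$ with $a\in N(x)$ and $b\in N(y)$. The key step is to find a Hamiltonian path $Q$ of $G[W]$ from $a$ to $b$: then replacing $xy$ by $x\,a\,Q\,b\,y$ yields a disjoint $(s_i,t_i)$-family that also covers $W$, hence all of $V(G)$, contradicting $N<n$. This forces $N=n$ and proves the lemma.

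Producing this endpoint-prescribed Hamiltonian path is the main obstacle, and it is exactly where the tight degree bound bites. When $|W|\ge 2k+1$, the estimate $\delta(G[W])\ge|W|-k$ gives $\deg(p)+\deg(q)\ge|W|+1$ for every non-adjacent pair $p,q$ of $G[W]$, so $G[W]$ is Hamiltonian-connected by Ore's classical theorem and any admissible $a,b$ are joined by a Hamiltonian path. The delicate case is the boundary $|W|=2k$, where $\delta(G[W])=|W|/2$ and $G[W]$ need not be Hamiltonian-connected. Here I would exploit that $G[W]$ is nonetheless extremely dense --- its complement has maximum degree at most $k-1$ --- so the only possible obstruction to a Hamiltonian path between prescribed endpoints is a near-balanced bipartition of $W$; since $x$ and $y$ each have a neighbour on both sides of such a split, I can choose $a$ and $b$ on opposite sides and still obtain $Q$. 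This disposes of the boundary case and completes the argument.
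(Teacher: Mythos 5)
Your overall strategy---fix a coverage-maximal disjoint family $\mathcal{P}$ and derive a contradiction by local rerouting---is the same as the paper's, and the first two stages are correct: the greedy construction of an initial family via common neighbours matches the paper, and your count showing that if no path-edge has both endpoints in $N(u)$ for a fixed uncovered $u$ then the paths span at most $2(k-1)$ edges, hence $N\le r+2(k-1)\le 3k-2$ and $|W|\ge 2k$, is a clean substitute for the paper's $k$-matching argument. The genuine gap is in your final step, specifically in the boundary case $|W|=2k$. The structural claim you invoke there---that the only obstruction to Hamiltonian-connectedness of a $2k$-vertex graph with minimum degree $k$ is a near-balanced bipartition---is false: the join $K_k+\overline{K_k}$ (a $k$-clique completely joined to a $k$-element independent set) has minimum degree $k$, its complement has maximum degree $k-1$, it is far from bipartite, and it admits no Hamiltonian path between two vertices of the clique side. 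So your prescription ``choose $a$ and $b$ on opposite sides of the split'' has nothing to attach to, and the case $|W|=2k$ is not actually proved. (Your appeal to Ore's Hamiltonian-connectedness criterion for $|W|\ge 2k+1$ is fine, since there $\deg(p)+\deg(q)\ge 2(|W|-k)\ge |W|+1$ for non-adjacent $p,q$.)

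The irony is that the Hamiltonian-path absorption is unnecessary. Once you know $|W|\ge 2k$ and have an edge $xy$ on some path (guaranteed by hypothesis $(ii)$), the vertices $x$ and $y$ have at most $2(k-1)<|W|$ non-neighbours between them, so some $w\in W$ is adjacent to both; replacing $xy$ by $x\,w\,y$ covers one more vertex and already contradicts the maximality of $\mathcal{P}$. This single-vertex insertion is exactly how the paper closes its small-coverage case, and substituting it for your final step repairs the proof completely.
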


\begin{proof}
We prove the lemma in two steps. First we show that there exists a family $\mathcal{P}'$ of pairwise vertex-disjoint paths connecting all pairs $\{s_i,t_i\}$. Then we show that if the paths of   $\mathcal{P}'$ do not cover all vertices of $G$, 
it is possible to  enlarge  a path such that the new family of paths covers more vertices.

We start by constructing a family of vertex-disjoint paths $\mathcal{P}'= \{P_1,\ldots,P_r\}$ in $G$ such that each $P_i \in \mathcal{P}'$ is an $(s_i,t_i)$-path. We prove that we can construct paths in such a way that each $P_i$ has at most 3 vertices. Let $T=\bigcup_{i=1}^r\{s_i,t_i\}$ and $S=V(G)\setminus T$. Notice that \(|S|\geq n-2k\geq \delta(G)+1-2k\geq 3k-2.\)
We consecutively construct paths of  $\mathcal{P}'$  for $i\in \{1,\ldots,r\}$. If $s_i=t_i$, then we have a trivial $(s_i,t_i)$-path. If $s_i$ and $t_i$ are adjacent, then  edge $s_it_i$ forms an $(s_i,t_i)$-path with 2 vertices. Assume that $s_i\neq t_i$ and $s_it_i\notin E(G)$. 
The already constructed paths contain at most $r-1\leq k-1$ vertices of $S$ in total.  Hence, there is a set $S'\subseteq S$ of at least $2k-1$ of vertices that are not contained in any of already constructed paths. Since $\delta(G)\geq n-k$, each vertex of $G$ has at most $k-1$ non-neighbors in $G$. By the pigeonhole principle, there is $v\in S'$ such that $s_iv,t_iv\in E(G)$. Then we can construct the path $P_i=s_ivt_i$. 

We proved that there is a family  $\mathcal{P}'= \{P_1,\ldots,P_r\}$
of vertex-disjoint $(s_i,t_i)$-paths  in $G$. Among all such families, let us select a family  $\mathcal{P}=\{P_1,\ldots,P_r\}$  covering the  maximum number of vertices of $V(G)$. If $\bigcup_{i=1}^r V(P_i)=V(G)$, then the lemma holds. Assume that $|\bigcup_{i=1}^r V(P_i)|<|V(G)|$.
Suppose  $|\bigcup_{i=1}^r V(P_i)|\leq 3k-1$. 
Since $s_i\neq t_i$ for some $i$, there is an edge $uv$ in one of the paths. Since \(n\geq \delta(G)+1\geq 5k-2,\) there are at least $2k-1$ vertices uncovered by paths of $\mathcal{P}$.  
   Since $\delta(G)\geq n-k$, each vertex of $G$ has at most $k-1$ non-neighbors in $G$. Thus there is $w\in V(G)\setminus (\bigcup_{i=1}^r V(P_i))$ adjacent to both $u$ and $v$.  But then we can extend the path containing $uv$ by replacing $uv$ by the path $uwv$. The paths of the new family cover more vertices than the paths of  $\mathcal{P}$, which  contradicts the choice of  $\mathcal{P}$. 
   
Suppose $|\bigcup_{i=1}^r V(P_i)|\geq 3k$. Because the paths of  $\mathcal{P}$ are vertex-disjoint, the union of edges  of  paths from  $\mathcal{P}$ contains a $k$-matching. That is, there are 
  $k$ edges $u_1v_1,\ldots,u_kv_k$ of $G$ such that for every $i\in \{1,\ldots, k\}$,  vertices $u_i, v_i$ are consecutive 
 in  some path from  $\mathcal{P}$ and $u_i\neq u_j$, $u_i\neq v_j$ for all non-equal $i, j\in \{1,\ldots, k\}$.
   Let $w\in V(G)\setminus (\bigcup_{i=1}^r V(P_i))$. We again use the observation that $w$ has at most $k-1$ non-neighbors in $G$ and, therefore, there is $j\in \{1,\ldots,k\}$ such that $u_jw,v_jw\in E(G)$. Then we   extend  
the path containing $u_jv_j$ by replacing edge $u_jv_j$ by the path $u_jwv_j$,   contradicting the choice of $\mathcal{P}$. We conclude that the paths of   $\mathcal{P}$ cover all vertices of  $G$.
\end{proof}

Let $G$ be a graph and let $T\subset V(G)$ be a set of terminals. We need the following definitions.

\begin{definition}[Terminal segments]
We say that a path $P$ in $G$ is a \emph{one-terminal $T$-segment} if it has at least two vertices, exactly one end-vertex of $P$ is in $T$ and other vertices are not in $T$. Respectively, $P$ is a \emph{two-terminal $T$-segment} if it has at least three vertices, both end-vertices of $P$ are in $T$ and internal  vertices of $P$  are not in $T$.
\end{definition}

For every cycle $C$ hitting $H$, removing the vertices of $H$ from $C$ turns it into a set of two-terminal  $T$-segments for $T=V(H)$. So here is the definition. 

\begin{definition}[System of  $T$-segments]
We say that a set $\{P_1,\ldots,P_r\}$ of paths in $G$ is a 
\emph{system of $T$-segments } 
if  it satisfies the following conditions.
\begin{itemize}
\item[(i)] For each $i\in\{1,\ldots,r\}$, $P_i$ is a two-terminal $T$-segment,
\item[(ii)] 
$P_1,\ldots,P_r$ are {internally vertex-disjoint},  and 
 \item[(iii)] the union of $P_1,\ldots,P_r$ is a linear forest.
\end{itemize}
\end{definition}

Let us remark that we do not require that the end-vertices of the paths $\{P_1,\ldots,P_r\}$ cover all vertices of $T$. 
System of segments will be used for solving \probLC. 

For \probLP we need to modify  the definition of a system of  $T$-segments to include the possibility that path can start or end in $H$.

\begin{definition}[Extended system of $T$-segments]
We say that a set $\{P_1,\ldots,P_r\}$ of paths in $G$ is an \emph{extended system of $T$-segments} if the following holds. 
\begin{itemize}
\item[(i)] At least one and at most two paths are one-terminal  $T$-segments and the other are  two-terminal  $T$-segments. 
\item[(ii)] $P_1,\ldots,P_r$ are internally vertex-disjoint and the end-vertices of each  one-terminal segment that is in $V(G)\setminus T$ is pairwise distinct with the other vertices of the paths. 
 \item[(iii)] The union of $P_1,\ldots,P_r$ is a linear forest and if $\{P_1,\ldots,P_r\}$ contains two one-terminal segments, then the vertices of these segments are in distinct components of the forest.
\end{itemize} 
\end{definition}

The following lemma will be extremely useful for the algorithm solving \probLP. Informally, it shows that if a connected graph $G$ is of large degeneracy but has a small core $H$, then  deciding whether $G$ has a path of length $k+d$ can be reduced to checking whether $G$ has an extended system of  $T$-segments with terminal set $T=V(H)$ such that the total number of vertices used by the system is   $\Oh(k)$. 

\begin{lemma}\label{lem:lp-eseg}
Let $d,k\in {\mathbb N}$.  Let $G$ be a connected graph with a $d$-core $H$  such that $d\geq 5k-3$ and $d>|V(H)|-k$. Then $G$ has a path on  $d+k$ vertices if and only if $G$ has an extended system of  $T$-segments   $\{P_1,\ldots,P_r\}$  with   terminal set $T=V(H)$ such that  the total number of vertices contained in the paths of the system  in $V(G)\setminus V(H)$ is $p=d+k-|V(H)|$. 
\end{lemma}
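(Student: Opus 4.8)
The plan is to prove both implications by relating a long path in $G$ to the subpaths that cross the boundary of the core $H$, using Lemma~\ref{lem:segm} to re-route inside $H$. Throughout I write $n_H = |V(H)|$ and recall that $\delta(H) \geq d$, so $n_H \geq d+1$; combined with the hypotheses this gives $1 \leq p = d+k-n_H \leq k-1$. The key bookkeeping observation I would record at the outset is that \emph{any} extended system of $T$-segments using exactly $p$ vertices of $V(G)\setminus V(H)$ contains at most $p \leq k-1$ segments, since every segment, one- or two-terminal, contains at least one vertex of $V(G)\setminus V(H)$.

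For the forward direction I would start from a path $Q$ on $d+k$ vertices (truncating to exactly $d+k$ vertices if it is longer). Since $|V(Q)\cap V(H)|\leq n_H$, the path $Q$ uses at least $(d+k)-n_H=p$ vertices of $V(G)\setminus V(H)$, and these are grouped into the maximal subpaths of $Q$ lying outside $H$. I would then select segments of $Q$ greedily until exactly $p$ outside vertices are collected: take full (two-terminal) segments while their outside-vertex total stays below $p$, and then take a suitable prefix of the next segment as a one-terminal segment to reach exactly $p$. Because all these pieces are subpaths of the simple path $Q$, the resulting family is internally vertex-disjoint, its union is a linear forest, and it contains exactly one one-terminal segment; hence it is an extended system of $T$-segments with $T=V(H)$ using exactly $p$ outside vertices. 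This direction is routine, and crucially it does \emph{not} require $Q$ to meet all of $V(H)$.

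The substantive direction is the backward one, where the given extended system must be completed into a path on $d+k$ vertices. Let $F$ be the linear forest formed by the segments and let $C_1,\ldots,C_m$ be its components; by the counting above $m \leq p \leq k-1$. By the definition of an extended system each component is a path with at most one free end (an endpoint in $V(G)\setminus V(H)$), and the number of free ends equals the number $a\in\{1,2\}$ of one-terminal segments. The plan is to chain the components into a single path through $H$: I would order the components, read off their endpoints lying in $V(H)$ (all distinct, since the components are vertex-disjoint), and pair consecutive endpoints so that linking each pair by a path inside $H$ produces one path whose two ends are the free ends (when $a=2$), or the unique free end together with an auxiliary vertex $z\in V(H)$ chosen off the terminals (when $a=1$). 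These $\leq m \leq k-1$ pairs of distinct vertices of $H$ satisfy the hypotheses of Lemma~\ref{lem:segm}: since $\delta(H)\geq d \geq 5k-3$ and $\delta(H)\geq d > n_H-k$, taking the lemma's parameter to be $k$ yields internally disjoint paths inside $H$ that connect the prescribed pairs and together cover all of $V(H)$. Splicing these $H$-paths with the segments yields a single path that visits every vertex of $H$ together with exactly the $p$ outside vertices of the system, hence has $n_H+p=d+k$ vertices.

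The main obstacle I expect is the bookkeeping in this backward construction: one must choose the pairing so that the spliced object is a single \emph{simple} path rather than a disjoint union of paths and cycles, correctly treating the $a=1$ and $a=2$ cases and components whose two boundary vertices happen to be endpoints shared between adjacent segments. Verifying that Lemma~\ref{lem:segm} is applicable — in particular that the number of pairs never exceeds $k$ and that the auxiliary vertex $z$ exists (which uses $n_H\geq d+1\geq 5k-2$ against the at most $2m-1 \leq 2k-3$ terminals) — is the other point requiring care, but both follow from the bound $m \leq k-1$ established at the outset.
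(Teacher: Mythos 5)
Your reconstruction of the ``system $\Rightarrow$ path'' direction is essentially the paper's argument (order the segments, pair consecutive endpoints, add an auxiliary terminal $z$ when there is only one one-terminal segment, and invoke Lemma~\ref{lem:segm} to cover all of $V(H)$), and your verification that Lemma~\ref{lem:segm} applies with parameter $k$ is correct. The gap is in the direction you dismiss as routine, ``path $\Rightarrow$ system'', which is where the paper spends most of its effort. First, your construction extracts segments only from subpaths of $Q$ itself, so it produces nothing when $V(Q)\cap V(H)=\emptyset$ --- a case that genuinely occurs (e.g.\ a core $K_{d+1}$ with a long pendant path attached): then $Q$ has no subpath with an endpoint in $T$, hence no $T$-segment at all, let alone the one-terminal segment that condition (i) of the definition demands. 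The paper's Case~1 handles this by leaving $Q$: it uses the connectivity of $G$ to route a shortest path from $Q$ to $V(H)$ and forms the one-terminal segment from that connector together with a piece of $Q$.

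Second, even when $Q$ meets $V(H)$, your greedy selection need not yield a legal \emph{extended} system. If the outside-vertex count reaches $p$ exactly on full two-terminal segments, there is no ``next segment'' to truncate and you end up with zero one-terminal segments, violating condition (i). Worse, if the selected pieces are consecutive along $Q$ and both extreme pieces have a free end outside $T$, you obtain two one-terminal segments lying in the \emph{same} component of the linear forest, violating condition (iii). That condition is not cosmetic: in the reverse direction it is precisely what guarantees that some pair $\{t_{i-1},s_i\}$ fed to Lemma~\ref{lem:segm} consists of two distinct vertices, without which that lemma cannot be applied to cover $V(H)$. The paper's split into Case~2 ($E(P)\cap E(H)=\emptyset$, where the extracted window is chosen to end at a vertex of $T$ so that only one end is free) and Case~3 ($E(P)\cap E(H)\neq\emptyset$, where an edge of $H$ inside the window separates the two free ends into distinct components) is exactly the device that secures conditions (i)--(iii); your proposal needs an analogous case analysis to close the argument.
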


\begin{proof}
We put $T=V(H)$. 
Suppose first that $G$ has  an extended system $\{P_1,\ldots,P_r\}$ of  $T$-segments  and that the total number of vertices of the paths in the system outside $T$ is $p=d+k-|T|$. Let $s_i$ and $t_i$ be the end-vertices of $P_i$ for $\in\{1,\ldots,r\}$ and assume without loss of generality that for $1\leq i<j\leq r$, the vertices of $P_i$ and $P_j$ are pairwise distinct with the possible exception $t_i=s_j$ when $i=j-1$. We also assume without loss of generality that $P_1$ is a one-terminal segment and $t_1\in T$ and if $\{P_1,\ldots,P_r\}$ has two one-terminal segments, then the second such segment is $P_r$ and $s_r\in T$. 

Suppose that $\{P_1,\ldots,P_r\}$ contains one one-terminal segment $P_1$. Let $s_{r+1}$ be an arbitrary vertex of $T\setminus (\bigcup_{i=1}^rV(P_i))$. Notice that such a vertex exists, because $|T\cap (\bigcup_{i=1}^rV(P_i))|\leq 2p-1\leq 2k-1$ and $|T|\geq d+1\geq 5k-3$. Consider the collection of pairs of vertices $\{t_1,s_2\},\{t_2,s_3\},\ldots,\{t_r,s_{r+1}\}$. Notice that vertices from distinct pairs are distinct and $t_r\neq s_{r+1}$. By Lemma~\ref{lem:segm}, there are vertex-disjoint paths $P_1',\ldots,P_r'$ in $H$ that cover $T$ such that $P_i'$ is a $(t_i,s_{i+1})$-path for $i\in\{1,\dots,r\}$. By concatenating $P_1,P_1',P_2,\ldots,P_r,P_r'$ we obtain a path in $G$ with $|T|+p=d+k$ vertices.

Assume now that $\{P_1,\ldots,P_r\}$ contains two one-terminal segments $P_1$ and $P_r$. Consider the collection of pairs of vertices $\{t_1,s_2\},\ldots,\{t_{r-1},s_{r}\}$. Notice that vertices from distinct pairs are distinct and there is $i\in\{2,\ldots,r\}$ such that $t_{i-1}\neq s_i$ by the condition (iii) of the definition of an extended system of segments.
By Lemma~\ref{lem:segm}, there are vertex-disjoint paths $P_1',\ldots,P_{r-1}'$ in $H$ that cover $T$ such that $P_i'$ is a $(t_i,s_{i+1})$-path for $i\in\{1,\dots,r-1\}$. By concatenating $P_1,P_1',\ldots,P_{r-1}',P_r$ we obtain a path in $G$ with $|T|+p=d+k$ vertices.
 
To show the implication in the opposite direction, let us assume that $G$ has and $(x,y)$-path $P$ with $d+k$ vertices. We distinguish several cases. 

\smallskip
\noindent{\bf Case 1: $V(P)\cap T=\emptyset$.} Consider a shortest path $P'$ with one  end-vertex $s\in V(P)$ and the second end-vertex $t\in T$. Notice that such a path exists, because $G$ is connected.
Denote by $P_x$ and $P_y$ the $(s,x)$ and $(s,y)$-subpaths of $P$ respectively. Because $d\geq 5k-3$, $|V(P_x)|\geq k$ or $|V(P_y)|\geq k$. Assume that  $|V(P_x)|\geq k$. Then the concatenation of $P'$ and $P_x$ is a path with at least $k+1$ vertices and it contains a subpath $P''$ with the end-vertex $t$ with $p+1$ vertices. We have that $\{P'\}$ is an extended system of  $T$-segments and $P''$ has $p$ vertices outside $T$.

\smallskip
\noindent{\bf Case 2: $V(P)\cap T\neq\emptyset$ and $E(P)\cap E(H)=\emptyset$.} Let $S=V(P)\cap T$. Note that $k>p$, because $|V(H)|>d$. 
Since $H$ is an induced subgraph of $G$ and $E(P)\cap E(H)=\emptyset$, $|V(P)\setminus S|\geq (d+k)/2-1\geq 3k-5/2>3p-5/2\geq 2p-2$. Then for every $t\in S$, either the  $(t,x)$-subpath $P_x$ of $P$ contains at least $p$ vertices outside $T$ or  the  $(t,y)$-subpath $P_y$ of $P$ contains at least $p$ vertices outside $T$. 
 Assume without loss of generality that $P_x$ contains at least $p$ vertices outside $T$.  
 Consider the minimal subpath $P'$ of $P_x$ ending at $t$ such that $|V(P')\setminus T|=p$. Then the start vertex $s$ of $P'$ is not in $T$. Let $\{t_1,\ldots,t_r\}=V(P')\cap T$ and assume that $t_1,\ldots,t_r$ are ordered in the same order as they occur in $P'$ starting from $s$. In particular, $t_r=t$. Let $t_0=s$. Consider the paths $P_1,\ldots,P_r$ where $P_i$ is the $(t_{i-1},t_i)$-subpath of $P'$ for $i\in \{1,\ldots,r\}$. Since $k\geq p$, $r\leq k$.  We obtain that $\{P_1,\ldots,P_r\}$ is an extended system of  $T$-segments with $p$ vertices outside $T$.

\smallskip
\noindent{\bf Case 3: $E(P)\cap E(H)\neq\emptyset$.} Then there are distinct $s,t\in T\cap V(P)$ such that the $(s,t)$-subpath of $P$ lies in $H$. Since $P$ has at least $p$ vertices outside $T$, there are $s',t'\in V(P)\setminus T$ such that the $(s',t')$-subpath $P'$ of $P$ is a  subpath with exactly $p$ vertices outside $T$ with $s,t\in V(P')$.  
Let $P_1,\ldots,P_r$ be the family of inclusion maximal subpaths of $P'$ containing the vertices of $V(P')\setminus T$ such that the internal vertices of each $P_i$ are outside $T$. 
The set  $\{P_1,\ldots,P_r\}$ is a required extended system of  $T$-segments.
\end{proof}

The next lemma will be   used  for   solving \probLC. 

\begin{lemma}\label{lem:lc-seg}
Let $d,k\in {\mathbb N}$.
Let $G$ be a 2-connected graph with a $d$-core $H$  such that $d\geq 5k-3$ and $d>|V(H)|-k$. Then $G$ has a cycle with at least $d+k$ vertices if and only if one of the following holds (where $p=d+k-|V(H)|$).
\begin{itemize}
\item[(i)] There are distinct $s,t\in V(H)$ and an $(s,t)$-path $P$ in $G$ with all internal vertices outside $V(H)$ such that $P$ has at least $p$ internal vertices.
\item[(ii)] $G$  has  a system of  $T$-segments  $\{P_1,\ldots,P_r\}$ with   terminal set $T=V(H)$ and the total number of vertices of the paths outside $V(H)$ is at least $p$ and at most $2p-2$. 
\end{itemize}
\end{lemma}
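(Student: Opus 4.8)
The plan is to prove the two directions separately, writing $T=V(H)$ and exploiting two facts the hypotheses give for free. First, $H$ being a $d$-core forces $|T|\ge d+1$, so $1\le p\le k-1$ and the target length $d+k$ equals $|T|+p$. Second, $\delta(H)\ge d$ and $2d\ge |T|+1$ (since $d\ge 5k-3\ge k-p+1$), so $\delta(H)\ge(|T|+1)/2$; hence $H$, and every induced subgraph obtained from it by deleting $O(k)$ vertices, is Hamiltonian-connected by an Ore/Dirac-type argument. For the forward direction I assume a cycle $C$ with $|V(C)|\ge d+k$ and argue as in the proof of Lemma~\ref{lem:lp-eseg}. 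If $V(C)\cap T=\emptyset$, then $2$-connectivity of $G$ yields, via Menger's theorem, two vertex-disjoint paths from $V(C)$ to $T$ meeting $C$ in two vertices and internally avoiding $C\cup H$; gluing them to the longer of the two $C$-arcs between those vertices (at least $p$ vertices, since $d+k\ge 2p$) produces an $(s,t)$-path with $s,t\in T$, interior outside $T$, and at least $p$ interior vertices, i.e.\ case~(i). If $V(C)\cap T\neq\emptyset$, I look at the maximal arcs of $C$ whose interiors avoid $T$; together they contain all of $V(C)\setminus T$, which numbers at least $|V(C)|-|T|\ge p$. If one arc already has $\ge p$ such vertices it is a path as in (i); otherwise each arc has at most $p-1$ of them, and walking along $C$ a shortest contiguous block of arcs with at least $p$ external vertices has, by minimality, at most $(p-1)+(p-1)=2p-2$ of them, and its arcs form a system of $T$-segments (internally disjoint, union a linear forest) with external total in $[p,2p-2]$, i.e.\ case~(ii).

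For the backward direction I turn (i) or (ii) into a long cycle by rerouting through $H$. Given (i), I take a Hamiltonian $(s,t)$-path of $H$ and close it up with $P$; the two parts meet only in $\{s,t\}$, so the cycle has the $|T|$ vertices of $H$ plus the $\ge p$ interior vertices of $P$, hence at least $d+k$. Given (ii), let $Q_1,\dots,Q_\ell$ be the components of the linear forest $\bigcup P_i$; each is a path with both ends in $T$, and the goal is to join their ends cyclically by $\ell$ internally vertex-disjoint $H$-paths that together cover $T$ minus the $T$-vertices already used internally by the $Q_i$. The natural tool is Lemma~\ref{lem:segm}, applied to the induced subgraph of $H$ on the still-available vertices with the $\ell$ endpoint-pairs as terminals: if this succeeds, the resulting cycle spans all of $T$ together with the $e\in[p,2p-2]$ external vertices, so it has $|T|+e\ge d+k$ vertices.

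The main obstacle is that Lemma~\ref{lem:segm} routes $r$ pairs covering the whole vertex set only when $r$ lies in the window $[\,|T|-\delta(H),\ \lfloor(\delta(H)+3)/5\rfloor\,]$, and the number $\ell$ of components need not fall in this window (which does contain $[k-p,k]$). When $\ell$ exceeds the upper end I would route only a sub-collection of the components whose size lies in the window: each contributes at least one external vertex, so the selected external total is still at least $p$, and covering all of $T$ keeps the cycle length $\ge d+k$. The genuinely delicate regime is $\ell<k-p$, where there are too few pairs for Lemma~\ref{lem:segm} to span $T$; here I would instead use the density of $H$ directly, since deleting the $O(k)$ used $T$-vertices leaves a subgraph whose minimum degree still exceeds half its order, so a Hamiltonian-connectivity argument (for $\ell=1$) or a spanning-$\ell$-path argument produces $\ell$ disjoint prescribed-endpoint paths covering all remaining vertices of $T$. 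Verifying that these density bounds survive the deletions (the counts $|I|\le e+\ell=O(k)$ together with $\delta(H)\ge d\ge(|T|+1)/2$ are what make this go through), and confirming that the boundary case $\ell=1$ with a single bare segment is exactly case~(i), is where the bookkeeping must be carried out with care.
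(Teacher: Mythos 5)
Your forward direction and your treatment of case (i) in the backward direction essentially reproduce the paper's argument, but the backward direction for case (ii) --- the heart of the lemma --- rests on a misreading of Lemma~\ref{lem:segm} and is consequently not a proof but a plan with acknowledged unverified steps. The quantity $n-\delta(G)$ in that lemma is a lower bound on the auxiliary parameter $k$, not on the number of pairs $r$: the pairs only need to satisfy $r\le k$, the disjointness condition, and the existence of a \emph{single} pair with $s_i\neq t_i$; in particular one nontrivial pair already yields a Hamiltonian $(s_1,t_1)$-path. So your ``window'' $[\,|T|-\delta(H),(\delta(H)+3)/5\,]$ constrains $k$, not $\ell$, the ``genuinely delicate regime $\ell<k-p$'' does not exist, and the Hamiltonian-connectivity and spanning-$\ell$-path machinery you introduce for it is unnecessary. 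Moreover, the device you do propose --- merging segments into the components $Q_1,\dots,Q_\ell$ of the linear forest, deleting from $H$ the $T$-vertices internal to these components, and rerouting in the remaining induced subgraph --- forces you to re-verify the degree hypotheses after deleting up to $r-\ell$ vertices, which is precisely the ``bookkeeping'' you defer and never carry out. The paper's actual idea, which you are missing, avoids all of this: keep the $r$ segments separate, order them so that consecutive segments may share an endpoint, form the $r$ pairs $\{t_1,s_2\},\dots,\{t_{r-1},s_r\},\{t_r,s_1\}$, and apply Lemma~\ref{lem:segm} \emph{once, to $H$ itself, with no deletions}. A shared endpoint $t_i=s_{i+1}$ becomes a trivial pair $s=t$, which condition (ii) of that lemma explicitly tolerates, and the linear-forest requirement guarantees $t_r\neq s_1$, supplying the one nontrivial pair needed. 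The returned paths cover all of $T$ and close the segments into a single cycle of length $|T|+p\ge d+k$.

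Two smaller holes in your forward direction. First, your dichotomy ``$V(C)\cap T=\emptyset$'' versus ``maximal arcs of $C$ with interiors avoiding $T$'' silently assumes $|V(C)\cap T|\ge 2$ in the second branch: if $C$ meets $T$ in exactly one vertex $s$ there are no two-terminal arcs at all, and one needs a separate $2$-connectivity argument (a shortest path from $V(C)$ to $T$ in $G-s$) to land in case (i). Second, when you take a minimal contiguous block of arcs with at least $p$ external vertices, you must still exclude the possibility that the selected arcs exhaust all of $C$, in which case their union is a cycle rather than a linear forest and you do not obtain a system of $T$-segments; the paper rules this out by counting, since such a $C$ would have at most $3p-2\le 3k-2<d+k$ vertices.
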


\begin{proof}
We put $T=V(H)$. 
First, we  show that if (i) or (ii) holds, then $G$ has a cycle with at least $d+k$ vertices. 
Suppose that there are distinct $s,t\in T$ and an $(s,t)$-path $P$ in $G$ with all internal vertices outside $T$ such that $P$ has at least $p$ internal vertices. By Lemma~\ref{lem:segm}, $H$ has a Hamiltonian $(s,t)$-path $P'$. By taking the union of $P$ and $P'$ we obtain a cycle with at least $|T|+p=d+k$ vertices.

Now assume that  $G$ has  a system of  $T$-segments $\{P_1,\ldots,P_r\}$  and the total number of vertices of the paths outside $T$ is at least $p$. Let $s_i$ and $t_i$ be the end-vertices of $P_i$ for $i\in\{1,\ldots,r\}$ and assume without loss of generality that for $1\leq i<j\leq r$, the vertices of $P_i$ and $P_j$ are pairwise distinct with the possible exception $t_i=s_j$ when $i=j-1$.  Consider the collection of pairs of vertices $\{t_1,s_2\},\ldots,\{t_{r-1},s_{r}\},\{t_r,s_1\}$. Notice that vertices from distinct pairs are distinct and $t_r\neq s_{1}$. By Lemma~\ref{lem:segm}, there are vertex-disjoint paths $P_1',\ldots,P_r'$ in $H$ that cover $T$ such that $P_i'$ is a $(t_i,s_{i+1})$-path for $i\in\{1,\dots,r-1\}$ and $P_r'$ is a $(t_r,s_1)$-path. By taking the union of $P_1,\ldots,P_r$ and $P_1',\ldots,P_r'$ we obtain a cycle in $G$ with at least $|T|+p=d+k$ vertices.

To show the implication in the other direction, assume that $G$ has a cycle $C$ with at least $d+k$ vertices. 

\smallskip
\noindent{\bf Case 1: $V(C)\cap T=\emptyset$.}
Since $G$ is a 2-connected graph, there are pairwise distinct vertices $s,t\in T$ and $x,y\in V(C)$ and vertex-disjoint  $(s,x)$ and $(y,t)$-paths $P_1$ and $P_2$ such that the internal vertices of the paths are outside $T\cup V(C)$. The cycle $C$ contains an $(x,y)$-path $P$ with at least $(d+k)/2+1\geq p$ vertices. The concatenation of $P_1$, $P$ and $P_2$ is an $(s,t)$-path in $G$ with at least $p$ internal verices and the internal vertices are outside $T$. Hence, (i) holds.

\smallskip
\noindent{\bf Case 2: $|V(C)\cap T|=1$.}
Let $V(C)\cap T=\{s\}$ for some vertex $s$. Since $G$ is 2-connected, there is a shortest $(x,t)$-path $P$ in $G-s$ such that $x\in V(C)$ and $t\in T$. The cycle $C$ contains an $(s,x)$-path $P'$ with at least $(d+k)/2+1\geq p$ vertices. The concatenation of $P'$ and $P$ is an $(s,t)$-path in $G$ with at least $p$ internal vertices and the internal vertices of the path are outside $T$. Therefore, (i) is fulfilled.

\noindent{\bf Case 3: $|V(C)\cap T|\geq 2$.} 
Since $\vert V(C)\vert \geq d$ and $\vert T\vert <d$, we have that $V(C)\setminus T\neq \emptyset$. 
 Then we can find pairs of distinct vertices $\{s_1,t_1\}\ldots,\{s_\ell,t_\ell\}$ of $T\cap V(C)$  and segments $P_1,\ldots,P_\ell$ of $C$ such that
(a) $P_i$ is an $(s_i,t_i)$-path for $i\in\{1,\ldots,\ell\}$ with at least one internal vertex and the internal vertices of $P_i$ are outside $T$,  
(b) for $1\leq i<j\leq \ell$, the vertitces of $P_i$ and $P_j$ are distinct with the  possible exception $t_i=s_j$ if $i=j-1$ and, possibly, $t_\ell=s_1$,   and 
(c) $\bigcup_{i=1}^\ell V(P_i)\setminus T=V(C)\setminus T$. 
If there is $i\in\{1,\ldots,\ell\}$ such that $P_i$ has at least $p$ internal vertices, then (i) is fulfilled. 

Now assume that each $P_i$ has at most $p-1$ internal vertices; notice that $p\geq2$ in this case.
We select an inclusion minimal set of indices $I\subseteq\{1,\ldots,\ell\}$ such that $|\bigcup_{i\in I}V(P_i)\setminus T|\geq p$. Notice that because each path has at most $p-1$ internal vertices,  
$|\bigcup_{i\in I}V(P_i)\setminus T|\leq 2p-2$. Let $I=\{i_1,\ldots,i_r\}$ and $i_1<\ldots<i_r$. 
By the choice of $P_{i_1},\ldots,P_{i_r}$, the union of $P_{i_1},\ldots,P_{i_r}$ is either the cycle $C$ or a linear forest. Suppose that the union of the paths is $C$. Then $I=\{1,\ldots,\ell\}$, $\ell\leq p$ and $\vert V(P)\cap T\vert=\ell$. We obtain that $C$ has at most $(2p-2)+p\leq 3p-2\leq 3k-2<d+k$ vertices (the last inequality follows from the fact that $d\geq 5k-3$); a contradiction. Hence,  the union of the paths  is a linear forest. Therefore, 
$\{P_{i_1},\ldots,P_{i_r}\}$ is a system of  $T$-segments with  terminal set $T=V(H)$ and the total number of vertices of the paths outside $T$ is at least $p$ and at most $2p-2$, that is, (ii) is fulfilled.
\end{proof}

We have established the fact that  existence of long (path) cycle is equivalent to the existence of  (extended) system of  $T$-segments for some terminal set $T$ with at most $p\leq k$ vertices from outside $T$. Towards designing algorithms for \probLP and \probLC, we define two auxiliary problems which can be solved using well known color-coding technique.

\defproblema{\probSEG}%
{A graph $G$, $T\subset V(G)$  and a positive integers $p$ and $r$.}%
{Decide whether $G$ has  a system of segments  $\{P_1,\ldots,P_r\}$ w.r.t. $T$ such that the total number of internal vertices of the paths is $p$.}

\defproblema{\probESEG}%
{A graph $G$, $T\subset V(G)$  and a positive integers $p$ and $r$.}%
{Decide whether $G$ has  an extended system of segments $\{P_1,\ldots,P_r\}$ w.r.t. $T$ such that the total number of vertices of the paths outside $T$ is $p$.}

\begin{lemma}\label{lem:seg}
\probSEG and \probESEG  are solvable in time $2^{\Oh(p)}\cdot n^{\Oh(1)}$.
\end{lemma}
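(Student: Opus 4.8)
The plan is to solve both \probSEG and \probESEG by the color-coding technique combined with dynamic programming, since in both problems we are looking for a structure (a linear forest) that uses only $p = \Oh(k)$ vertices outside the terminal set $T$. The key observation is that although the paths $P_1,\ldots,P_r$ may pass through many terminal vertices of $T$, the structural constraints (internal vertices outside $T$, union is a linear forest) mean that the \emph{relevant} part of the solution is governed by its $\Oh(p)$ non-terminal vertices. I would first argue that the number of distinct paths $r$ need not be guessed as an independent parameter: since each two-terminal segment contains at least one internal vertex, we have $r \le p$, so we may assume $r \le p = \Oh(k)$. More importantly, I want to reduce the search to a bounded-size ``interesting'' set of vertices.

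First I would set up the color-coding. The target solution uses at most $2p$ vertices outside $T$ together with their neighboring terminal endpoints, so the set of vertices we genuinely need to pin down has size $\Oh(p)$. I would randomly color $V(G) \setminus T$ with $p$ colors (or apply a $(n,p)$-perfect hash family of size $2^{\Oh(p)}\log n$ for derandomization), and seek a colorful solution in which the $p$ required non-terminal vertices all receive distinct colors. Standard color-coding guarantees that if a solution exists, then with probability $2^{-\Oh(p)}$ (or for some function in the hash family) the non-terminal vertices of some valid solution become colorful; repeating $2^{\Oh(p)}$ times boosts the success probability to constant, and the perfect-hash-family version makes this deterministic at cost $2^{\Oh(p)} n^{\Oh(1)}$.

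The heart is then a dynamic program that assembles a colorful system of segments. I would process the segments one at a time, building each two-terminal segment as an $(s_i,t_i)$-path whose internal vertices are colorful non-terminal vertices, and maintain a DP state recording the set of colors used so far and the relevant bookkeeping needed to ensure the three defining conditions: internal vertex-disjointness, the linear-forest condition on the union, and (for \probESEG) the at-most-two one-terminal segments with their extra structural constraints from condition (iii). Because the internal vertices are colorful, vertex-disjointness of internal vertices reduces to disjointness of color subsets, which is exactly what color-coding DP tracks efficiently; the table has $2^p \cdot n^{\Oh(1)}$ entries. I would handle \probESEG as a mild variant: guess which one or two segments are one-terminal (an $\Oh(n^2)$ choice of their free non-terminal endpoints) and guess the distinct components required by condition (iii), then run the same DP with the two-terminal segments. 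Since a two-terminal segment passing through $T$ internally is forbidden, each segment is just a colorful path between two prescribed $T$-vertices with all intermediate vertices colorful and outside $T$, which the DP can route.

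The main obstacle I anticipate is correctly enforcing the \emph{linear-forest} condition (iii), rather than merely internal vertex-disjointness: the endpoints of the segments lie in $T$ and may be shared between consecutive segments, so the union could in principle create a vertex of degree three or a cycle within $T$ even when internal vertices are disjoint. I would address this by tracking, in the DP state, the multiset of terminal endpoints used and their incident-degree pattern so far, rejecting any extension that would raise a terminal's degree above two or close a cycle prematurely; since only $\Oh(p)$ segment-endpoints are involved and each terminal can be touched by at most two segment-ends in a linear forest, this bookkeeping adds only a polynomial (in fact $2^{\Oh(p)} n^{\Oh(1)}$) overhead. Once the DP correctly certifies a colorful linear forest of $r$ two-terminal segments (plus the guessed one-terminal segments for \probESEG) using exactly $p$ non-terminal vertices, we report \yes; combining the $2^{\Oh(p)}$ color-coding rounds with the $2^{\Oh(p)} n^{\Oh(1)}$ DP gives the claimed $2^{\Oh(p)} \cdot n^{\Oh(1)}$ bound for both problems.
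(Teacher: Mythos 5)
Your overall strategy (color-coding plus a subset-based dynamic program over color classes) is the same as the paper's, but there is a genuine gap in how you enforce the linear-forest condition, and you have put your finger on it yourself without actually resolving it. You color only $V(G)\setminus T$, so the colors certify disjointness of the \emph{internal} vertices only. To check that the union of the segments is a linear forest you must also control the terminal endpoints: no vertex of $T$ may be hit by three segment-ends, and the shared endpoints must not close a cycle. Your proposed fix is to carry ``the multiset of terminal endpoints used and their incident-degree pattern'' in the DP state. But these endpoints are arbitrary vertices of $T$, and $|T|$ can be as large as $\Theta(n)$; remembering which up to $2r\leq 2p$ terminals have been used requires on the order of $\binom{n}{2p}=n^{\Theta(p)}$ states, which is \classXP, not $2^{\Oh(p)}\cdot n^{\Oh(1)}$. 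Recording only an abstract degree pattern without identities does not work either, since then the DP cannot detect that a newly chosen segment reuses a terminal already consumed by an earlier, non-adjacent segment (possibly in a different component of the forest). So as written, the claimed $2^{\Oh(p)}n^{\Oh(1)}$ bound for the bookkeeping is unsubstantiated and the linear-forest condition is not actually verified.

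The paper sidesteps this by coloring \emph{all} of $V(G)$ with $q=p+2r\leq 3p$ colors and demanding a colorful solution in which every vertex of every segment -- terminal endpoints included -- receives a distinct color, with the single allowed exception that consecutive segments $P_{i-1},P_i$ may share the endpoint $t_{i-1}=s_i$. After reordering the segments so that this is the only possible sharing, the linear-forest condition becomes a statement about color-disjointness of the segments' full color sets, which the subset DP tracks for free; the recurrence has two branches according to whether the new segment reuses the color of the previous segment's last endpoint (shared vertex) or uses an entirely fresh color set. If you extend your coloring to include the $\Oh(p)$ terminal endpoints of the target solution (which costs only a constant-factor increase in the number of colors and hence stays within $2^{\Oh(p)}$), your argument goes through; without that modification it does not.
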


\begin{proof}
We start with the algorithm for  \probSEG. Then we show how to modify it  for \probESEG. 
Our algorithm uses the \emph{color coding} technique introduced by Alon, Yuster and Zwick
in~\cite{AlonYZ95}. As it is usual for algorithms of this type, we first describe a randomized Monte-Carlo algorithm and then explain how it could be derandomized. 

Let $(G,T,p,r)$ be an instance of \probSEG. 

Notice that if paths $P_1,\ldots,P_r$ are a solution for the instance, that is, $\{P_1,\ldots,P_r\}$  is a system of  $T$-segments and the total number of internal vertices of the paths is $p$, then $|\cup_{i=1}^rV(P_i)|\leq p+2r$. If $r>p$, then because each path in a solution should have at least one internal vertex, $(G,T,p,r)$ is a no-instance. Therefore, we can assume without loss of generality that $r\leq p$. Let $q=p+2r\leq 3p$. We color the vertices of $G$ with $q$ colors uniformly at random. 
Let $P_1,\ldots,P_r$ be paths and $G$ and let $s_i,t_i$ be the end-vertices of $P_i$ for $i\in\{1,\ldots,r\}$.
We say that the paths $P_1,\ldots,P_r$ together with the ordered pairs $(s_i,t_i)$ of their end-vertices form a \emph{colorful solution} if the following is fulfilled:
\begin{itemize}
\item[(i)] $\{P_1,\ldots,P_r\}$  is a system of  $T$-segments,
\item[(ii)] $|\cup_{i=1}^r V(P_i)\setminus T|=p$,
\item[(iii)] if $1\leq i<j\leq r$, $u\in V(P_i)$ and $v\in V(P_j)$, then the vertices $u$ and $v$ have distinct colors unless $i=j-1$,  $u=t_i$ and $v=s_j$ (in this case the colors can be distinct or  same). 
\end{itemize}
It is straightforward to see that any colorful solution is a solution of the original problem. From the other side, if $(G,T,p,r)$ has a solution $P_1,\ldots,P_r$, then with probability at least $\frac{q!}{q^q}>e^{-q}$ all distinct vertices of the paths of a solution are colored by distinct colors and for such a coloring, $P_1,\ldots,P_r$ is a colorful solution. Since $q\leq 3p$, we have that the probability is lower bounded by $e^{-3p}$. This shows that if there is no colorful solution, then the probability that $(G,T,p,r)$ is a yes-instance is at most $1-e^{-3p}$. It immediately implies 
that if after trying $e^{3p}$ random colorings there is no colorful solution for any of them, then the probability that  $(G,T,p,r)$ is a yes-instance is at most $(1-e^{-3p})^{e^{3p}}<e^{-1}<1$.

We construct a dynamic programming algorithm that decides whether there is a colorful solution. Denote by $c\colon V(G)\rightarrow \{1,\ldots,q\}$ the considered random coloring.

In the first step of the algorithm, for each non-empty $X\subseteq \{1,\ldots,q\}$ and distinct $i,j\in X$, we compute the Boolean function $\alpha(X,i,j)$ such that $\alpha(X,i,j)=true$ if and only if there are $s,t\in T$ and an $(s,t)$-path $P$ such that $P$ is a two-terminal  $T$-segment, $|V(P)|=|X|$,
 $c(s)=i$, $c(t)=j$ and the vertices of $P$ are colored by pairwise distinct colors from $X$.
We define $\alpha(X,i,j)=false$ if $|X|<3$. For other cases, we use dynamic programming. 

We use a dynamic-programming algorithm to compute $\alpha(X,i,j)$. For each $v\in V(G)\setminus T$ and each non-empty $Y\subseteq X\setminus\{i\}$, we compute the Boolean function $\beta(Y,i,v)$ such that $\beta(Y,i,v)=true$ if and only if there is $s\in T$ and an $(s,v)$-path $P'$ such that $V(P')\setminus\{s\}\subseteq V(G)\setminus T$, $c(s)=i$, $|V(P)\setminus\{s\}|=|Y|$ and the vertices of $V(P)\setminus\{s\}$ are colored by pairwise distinct colors from $Y$.

We compute $\beta(Y,i,v)$ recursively starting with one-element sets. For every $Y=\{h\}$, where $h\neq i$, and every $v\in V(G)\setminus T$, we set $\beta(Y,i,v)=true$ if $c(v)=h$ and $v$ is adjacent to a vertex of $T$ colored $i$, and we set  $\beta(Y,i,v)=false$ otherwise. For $Y\subseteq\{1,\ldots,q\}\setminus\{i\}$ of size at least two, we set $\beta(Y,v,i)=true$ if $c(v)\in Y$ and 
 there is $w\in N_G(v)\setminus T$ with $\beta(i,Y\setminus\{c(v)\},w)=true$, and $\beta(Y,i,v)=false$ otherwise. 

We set $\alpha(X,i,j)=true$ if and only if there are $t\in T$ and $v\in N_G(t)\setminus T$ such that $c(t)=j$ and $\beta(X\setminus\{i,j\},i,v)=true$.

The correctness of computing $\beta$ and $\alpha$ is proved by standard arguments in a straightforward way. Notice that we can compute the tables of values of $\beta$ and $\alpha$ in time $2^q\cdot n^{\Oh(1)}$. First, we compute the values of $\beta(Y,i,v)$ for all $v\in V(G)\setminus T$, $i\in \{1,\dots,q\}$ and non-empty $Y\subseteq\{1,\ldots,q\}\setminus\{i\}$. Then we use the already computed values of $\beta$ to compute the table of values of $\alpha$.

Next, we use the table of values of $\alpha$ to check whether a colorful solution exists. We introduce the Boolean function $\gamma_0(i,X,\ell,j)$ such that for each $i\in\{1,\ldots,r\}$, $X\subseteq\{1,\ldots,q\}$, integer $\ell\leq p$ and $j\in X$, $\gamma_0(i,X,\ell,j)=true$ if and only if there are paths 
$P_1,\ldots,P_i$ and ordered pairs $(s_1,t_1),\ldots,(s_i,t_i)$ of distinct vertices of $T$ such that each $P_h$ is an $(s_h,t_h)$-path  and the following is fulfilled:
\begin{itemize}
\item[(i)] $\{P_1,\ldots,P_i\}$  is a system of  $T$-segments,
\item[(ii)] $|\cup_{h=1}^i V(P_h)\setminus T|=\ell$,
\item[(iii)] if $1\leq f<g \leq i$, $u\in V(P_f)$ and $v\in V(P_g)$, then the vertices $u$ and $v$ have distinct colors unless $f=g-1$,  $u=t_f$ and $v=s_g$ when the colors could be same,
\item[(iv)] $c(t_i)=j$.
\end{itemize}
Notice, that if $\ell<i$, then $\gamma_0(i,X,\ell,j)=false$. Our aim is to compute $\gamma_0(r,X,p,j)$ for $X\subseteq \{1,\ldots,q\}$ and $j\in\{1,\ldots,q\}$. Then we observe that a colorful solution exists if and only if  there is $X\subseteq \{1,\ldots,q\}$ and $j\in\{1,\ldots,q\}$ such that $\gamma_0(r,X,p,j)=true$.

If $i=1$ and $\ell\geq 1$, then 
\begin{equation}\label{eq:gamma_0}
\gamma_0(1,X,\ell,j)=\big(\bigvee_{h\in X\setminus\{j\}}\alpha(X,h,j)\big)\wedge\big(|X|=\ell+2\big).
\end{equation}
 For $\ell\geq i>1$, we use the following recurrence:
\begin{equation}\label{eq:gamma_0-r}
\begin{split}
\gamma_0(i,X,\ell,j)=&\big(\bigvee_{j\in Y\subset X,h\in Y\setminus\{j\}}(\alpha(Y,h,j)\wedge \gamma_0(i-1,(X\setminus Y)\cup\{h\},\ell-|Y|+2,h)) \big)\\
\vee&\big(\bigvee_{j\in Y\subset X,h\in Y\setminus\{j\},h'\in X\setminus Y}(\alpha(Y,h,j)\wedge \gamma_0(i-1,X\setminus Y,\ell-|Y|+2,h')) \big).
\end{split}
\end{equation}

The correctness of (\ref{eq:gamma_0}) and (\ref{eq:gamma_0-r}) is proved by the standard arguments.  Since the size of the table of values of $\alpha$ is $2^q\cdot n^{\Oh(1)}$ and the table can be constructed in time $2^q\cdot n^{\Oh(1)}$, we obtain that the values of $\gamma_0(r,X,p,j)$ for $X\subseteq \{1,\ldots,q\}$ and $j\in\{1,\ldots,q\}$ can be computed in time $3^q\cdot n^{\Oh(1)}$. Therefore, the existence of a colorful solution can be checked in time $3^q\cdot n^{\Oh(1)}$. 

This leads us to a Monte-Carlo algorithm for \probSEG. We try at most $e^{3p}$ random colorings. For each coloring, we check the existence of a colorful solution. If such a solution exists, we report that we have a yes-instance of the problem. If after trying $e^{3p}$ random colorings we do not find a colorful solution for any of them, we return the answer no. As we already observed, the probability that this negative answer is false is at most $(1-e^{-3p})^{e^{3p}}<e^{-1}<1$, that is, the probability is upper bounded by the constant $e^{-1}<1$ that does not depend on the problem size and the parameter. The running time of the algorithm is $(3e)^{3p}\cdot n^{\Oh(1)}$.  

The algorithm can be derandomized, as it was explained in~\cite{AlonYZ95} (we also refer to~\cite{cygan2015parameterized} for the detailed introduction to the technique), by the replacement of random colorings by a family of \emph{perfect hash functions}.   The currently best explicit construction of such families was done by Naor, Schulman and Srinivasan in~\cite{NaorSS95}. 
The family of perfect hash function in our case has size $e^{3p}p^{O(\log p)}\log n$ and can be constructed in time $e^{3p}p^{O(\log p)}n\log n$~\cite{NaorSS95}. 
It immediately gives the deterministic algorithm for $\probSEG$ running in time $(3e)^{3p}p^{\Oh(\log p)}\cdot n^{\Oh(1)}$.

\medskip
Now we explain how the algorithm for $\probSEG$ can be modified for $\probESEG$. 

Let $(G,T,p,r)$ be an instance of \probESEG. 

If paths $P_1,\ldots,P_r$ are a solution for the instance, that is, $\{P_1,\ldots,P_r\}$  is an extended system of  $T$-segments and the total number of vertices of the paths outside $T$ is $p$, then $|\cup_{i=1}^rV(P_i)|\leq p+2r-1$. If $r>p$, then because each path in a solution should have at least one vertex outside $T$, $(G,T,p,r)$ is a no-instance. Therefore, we can assume without loss of generality that $r\leq p$. The total number of distinct vertices of the paths $q\in\{p+r,\ldots,p+2r-1\}$ and $q\leq 3p$. We guess the value of $q$
and color the vertices of $G$ with $q$ colors uniformly at random. 
Let $P_1,\ldots,P_r$ be paths and $G$ and let $s_i,t_i$ be the end-vertices of $P_i$ for $i\in\{1,\ldots,r\}$.
We say that the paths $P_1,\ldots,P_r$ together with the ordered pairs $(s_i,t_i)$ of their end-vertices form a \emph{colorful solution} if the following is fulfilled:
\begin{itemize}
\item[(i)] $\{P_1,\ldots,P_r\}$  is an extended system of  $T$-segments,
\item[(ii)] if $\{P_1,\ldots,P_r\}$ has one one-terminal segment, then this is $P_1$ and $t_1\in T$,  and if $\{P_1,\ldots,P_r\}$ has two one-terminal segments, then these are $P_1,P_r$ and $t_1,s_r\in T$,
\item[(iii)] $|\cup_{i=1}^r V(P_i)\setminus T|=p$,
\item[(iv)] if $1\leq i<j\leq r$, $u\in V(P_i)$ and $v\in V(P_j)$, then the vertices $u$ and $v$ have distinct colors unless $i=j-1$,  $u=t_i$ and $v=s_j$ (in this case the colors could be distinct or  same), and if $\{P_1,\ldots,P_r\}$ contains two one-terminal segments, then there is $i\in\{2,\ldots,r\}$ such that $t_{i-1}$ and $s_i$ have distinct colors. 
\end{itemize}
In the same way as before,  any colorful solution is a solution of the original problem and  if after trying $e^{3p}$ random colorings there is no colorful solution for any of them, then the probability that  $(G,T,p,r)$ is a yes-instance is at most $(1-e^{-3p})^{e^{3p}}<e^{-1}<1$.

We construct a dynamic programming algorithm that decides whether there is a colorful solution. Denote by $c\colon V(G)\rightarrow \{1,\ldots,q\}$ the considered random coloring.

First, we construct the tables of values of the Boolean functions $\alpha$ and $\beta$ defined above exactly in the same way as in the algorithm for \probSEG.
Now we consider the following two possibilities.
 
We check the existence of a colorful solution such that $\{P_1,\ldots,P_r\}$ has one one-terminal segment $P_1$. 
We introduce the Boolean function $\gamma_1(i,X,\ell,j)$ for each $i\in\{1,\ldots,r\}$, $X\subseteq\{1,\ldots,q\}$, integer $\ell\leq p$ and $j\in X$ such that $\gamma_1(i,X,\ell,j)=true$ if and only if there are paths 
$P_1,\ldots,P_i$ and ordered pairs $(s_1,t_1),\ldots,(s_i,t_i)$ of distinct vertices of $T$ such that each $P_h$ is $(s_h,t_h)$-path for $h\in\{1,\ldots,i\}$ and the following is fulfilled:
\begin{itemize}
\item[(i)] $\{P_1,\ldots,P_i\}$  is an extended system of  $T$-segments with one one-terminal segment $P_1$ and $t_1\in T$,
\item[(ii)] $|\cup_{h=1}^i V(P_h)\setminus T|=\ell$,
\item[(iii)] if $1\leq f<g \leq i$, $u\in V(P_f)$ and $v\in V(P_g)$, then the vertices $u$ and $v$ have distinct colors unless $f=g-1$,  $u=t_f$ and $v=s_g$ when the colors could be same,
\item[(iv)] $c(t_i)=j$.
\end{itemize}
As with $\gamma_0$,  $\gamma_1(i,X,\ell,j)=false$ if $\ell<i$. A colorful solution exists if and only if  there is $X\subseteq \{1,\ldots,q\}$ and $j\in\{1,\ldots,q\}$ such that $\gamma_1(r,X,p,j)=true$.

If $i=1$ and $\ell\geq 1$, then 
\begin{equation}\label{eq:gamma_1}
\gamma_1(1,X,\ell,j)=\big(\bigvee_{v\in V(G)\setminus T}\beta(X\setminus\{j\},j,v)\big)\wedge\big(|X|=\ell+1\big).
\end{equation}
For $\ell\geq i>1$, we use the same recurrence as (\ref{eq:gamma_0-r}):
\begin{equation}\label{eq:gamma_1-r}
\begin{split}
\gamma_1(i,X,\ell,j)=&\big(\bigvee_{j\in Y\subset X,h\in Y\setminus\{j\}}(\alpha(Y,h,j)\wedge \gamma_1(i-1,(X\setminus Y)\cup\{h\},\ell-|Y|+2,h)) \big)\\
\vee&\big(\bigvee_{j\in Y\subset X,h\in Y\setminus\{j\},h'\in X\setminus Y}(\alpha(Y,h,j)\wedge \gamma_1(i-1,X\setminus Y,\ell-|Y|+2,h')) \big).
\end{split}
\end{equation}
Again, it is standard to prove correctness of (\ref{eq:gamma_1}) and (\ref{eq:gamma_1-r}) and the existence of a colorful solution can be checked in time $3^q\cdot n^{\Oh(1)}$. 

Now  we check the existence of a colorful solution such that $\{P_1,\ldots,P_r\}$ has two one-terminal segments $P_1$ and $P_r$. 
It is possible to write down a variant of the dynamic programming algorithm tailored for this case, but it is more simple to reduce this case to the already considered. Recall that we are interested in a colorful solution with the property that  there is 
$i\in\{2,\ldots,r\}$ such that the vertices of $\cup_{j=1}^{i-1}V(P_j)$ and the vertices of $\cup_{j=i}^{r}V(P_j)$ are colored by distinct colors. 
We obtain that a colorful solution that we are looking for can be seen as disjoint union of two partial colorful solutions $\{P_1,\ldots,P_{i-1}\}$ and $\{P_i,\ldots,P_r\}$ such that each of them has one one-terminal segment. To find them, we use the function $\gamma_1$ constructed above. We guess the value of $i\in\{2,\ldots,r\}$. 
Recall that  we are looking for a  solution that uses all colors from $\{1,\ldots,q\}$.
We construct the tables of values of $\gamma_1(i-1,X,\ell,j)$ and $\gamma_1(r-i+1,X',\ell',j')$. It remains to observe that a colorful solution exists if and only if there $X\subseteq \{1,\ldots,q\}$, $j\in X$, $j'\in \{1,\ldots,q\}\setminus X$ and $\ell\in\{1,\ldots,p-1\}$ such that 
$\gamma_1(i-1,X,\ell,j)\wedge\gamma_1(r-i+1,\{1,\ldots,q\}\setminus X,p-\ell,j')=true$. This implies that the existence of a colorful solution with two one-terminal segments can be checked in time $3^q\cdot n^{\Oh(1)}$. 
  
\medskip
As with \probSEG, we obtain the Monte-Carlo algorithm running in time $(2e)^{3p}\cdot n^{\Oh(1)}$ and then we can derandomize it to obtain the deterministic algorithm with running time 
$(3e)^{3p}p^{\Oh(\log p)}\cdot n^{\Oh(1)}$.
\end{proof}

\section{Putting all together:   Final proofs}\label{sec:lp}

\subparagraph{Proof of Theorem~\ref{thm:lp}.}

Let $G$ be a connected graph of degeneracy at least $d$ and let $k$ be a positive integer. 
If $d\leq 5k-4$, then we check the existence of a path with $d+k\leq 6k-4$ vertices using Proposition~\ref{thm:lp-lc} in time $2^{\Oh(k)}\cdot n^{\Oh(1)}$. 
Assume from now that $d\geq 5k-3$. Then we find a $d$-core $H$ of $G$. This can be done  in linear time using the results of Matula and Beck~\cite{MatulaB83}. If $|V(H)|\geq d+k$, then by 
Theorem~\ref{thm:lp-mindeg}, $H$, and hence $G$,  contains a path with $\min\{2d+1,|V(H)|\}\geq d+k$ vertices. Assume that  $|V(H)|< d+k$. By Lemma~\ref{lem:lp-eseg}, $G$ has a path with $d+k$ vertices if and only if $G$ has paths $P_1,\ldots,P_r$ such that $\{P_1,\ldots,P_r\}$ is an extended system of  $T$-segments for  $T=V(H)$ and the total number of vertices of the paths outside $T$ is $p=d+k-|T|$. Since the number of vertices in every graph is more than its minimum degree, we have that 
$|T|>d$, and thus $p<k$.
 For each  $r\in\{1,\ldots,p\}$, we verify if such a system exists in time $2^{\Oh(k)}\cdot n^{\Oh(1)}$ by making use of  Lemma~\ref{lem:seg}. Thus the total running time of the algorithm is  $2^{\Oh(k)}\cdot n^{\Oh(1)}$.

\subparagraph{Proof of Theorem~\ref{thm:lc}}

Let $G$ be a $2$-connected graph of degeneracy at least $d$ and let $k\in {\mathbb N}$. 
If $d\leq 5k-4$, then we check the existence of a cycle with at least $d+k\leq 6k-4$ vertices using Proposition~\ref{thm:lp-lc} in time $2^{\Oh(k)}\cdot n^{\Oh(1)}$. 
Assume from now on that $d\geq 5k-3$. 
Then we find a $d$-core $H$ of $G$ in linear time using the results of Matula and Beck~\cite{MatulaB83}. 

We claim that if $|V(H)|\geq d+k$, then $H$ contains a cycle with at least $d+k$ vertices. 
If $H$ is 2-connected, then this follows from Theorem~\ref{thm:circum}. Assume that $H$ is not a 2-connected graph. By the definition of a $d$-core, $H$ is connected. Observe that $|V(H)|\geq d+1\geq 5k-2\geq 3$. Hence, $H$ has at least two blocks and at least one cut vertex.  
Consider the block graph $Block(H)$ of $H$. Recall that the vertices of $Block(H)$ are the blocks and the cut vertices of $H$ and a cut vertex $c$ is adjacent to a block $B$ if and only if $c\in V(B)$.
Recall also that $Block(H)$ is a tree. We select an arbitrary block $R$ of $H$ and declare it to be the \emph{root} of $Block(H)$.
Let $S=V(G)\setminus V(H)$. Observe that $S\neq\emptyset$, because $G$ is 2-connected and $H$ is not. Let $F_1,\ldots,F_\ell$ be the components of $G[S]$. We contract the edges of each component and denote the obtained vertices by $u_1,\ldots,u_\ell$. Denote by $G'$ the obtained graph. It is straightforward to verify that $G'$ has no cut vertices, that is, $G'$ is 2-connected. For each $i\in\{1,\ldots,\ell\}$, consider $u_i$. This vertex has at least 2 neighbors in $V(H)$. We select a vertex $v_i\in N_{G'}(u_i)$ that is not a cut vertex of $H$ or, if all the neighbors of $u_i$ are cut vertices, we select $v_i$ be a cut vertex at maximum distance  from $R$ in $Block(H)$. Then we contract $u_iv_i$. Observe that by the choice of each $v_i$, the graph $G''$ obtained from $G'$ by contracting $u_1v_1,\ldots,u_\ell v_\ell$ is 2-connected. We have that $G''$ is a 2-connected graph of minimum degree at least $d$ with at least $d+k$ vertices. By Theorem~\ref{thm:circum}, $G''$ has a cycle with at least $\min\{2d, |V(G'')|\}\geq d+k$ vertices. Because $G''$ is a contraction of $G$, we conclude that $G$ contains a cycle with at least $d+k$ vertices as well.

From now we can assume that  $|V(H)|< d+k$. By Lemma~\ref{lem:lc-seg}, $G$ has a cycle with $d+k$ vertices if and only if one of the following holds for $p=d+k-|T|$ where $T=V(H)$.
\begin{itemize}
\item[(i)] There are distinct $s,t\in T$ and an $(s,t)$-path $P$ in $G$ with all internal vertices outside $T$ such that $P$ has at  least $p$ internal vertices.
\item[(ii)] $G$ has  a system of  $T$-segments   $\{P_1,\ldots,P_r\}$   and the total number of vertices of the paths outside $T$ is at least $p$ and at most $2p-2$. 
\end{itemize}

Notice that $p\leq k$ (because $d-\vert T\vert \leq 0$). 
We verify whether (i) holds using Proposition~\ref{thm:lp-s-t}. To do it, we consider all possible choices of distinct $s,t$. Then we construct the auxiliary graph $G_{st}$ from $G$ by the deletion of the vertices of $T\setminus\{s,t\}$ and the edges of $E(H)$. Then we check whether $G_{st}$ has an $(s,t)$-path of length at least $p+1$ in time $2^{\Oh(k)}\cdot n^{\Oh(1)}$ applying  Proposition~\ref{thm:lp-s-t}.

Assume that (i) is not fulfilled. Then it remains to check (ii). For every $r\in\{1,\ldots,p\}$, we verify the existence of  a system of  $T$-segments  $\{P_1,\ldots,P_r\}$ 
in time $2^{\Oh(k)}\cdot n^{\Oh(1)}$ using Lemma~\ref{lem:seg}. We return the answer {\em yes} if we get the answer yes for at least one instance of \probSEG and we return {\em no} otherwise.

\section{Hardness for Longest Path and Cycle above Degeneracy}\label{sec:hardnees}
In this section we complement Theorems~\ref{thm:lp} and \ref{thm:lc} by some hardness observations.

\begin{proposition}\label{prop:paraNPc}\footnote{Proposition~\ref{prop:paraNPc} and its proof was pointed to us by Nikolay Karpov.}
\probLP is \classNP-complete even if $k=2$ and \probLC is \classNP-complete even for connected graphs and $k=2$. 
\end{proposition}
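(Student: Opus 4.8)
The plan is to prove both NP-completeness claims by essentially the same reductions sketched informally in the introduction, and to make the degeneracy bookkeeping precise. Membership in \classNP{} is immediate for both problems: given a candidate path (respectively cycle), one verifies in polynomial time that it is simple and has at least $\dg(G)+k$ vertices, and $\dg(G)$ itself is computable in polynomial time via~\cite{MatulaB83}. So the whole task reduces to establishing \classNP-hardness.

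For \probLP{} with $k=2$, I would reduce from \HamCycle{} (or equivalently \textsc{Hamiltonian Path}), which is \classNP-complete. Given an instance $G$ on $n$ vertices, construct $H$ as the \emph{disjoint} union of $G$ and a clique $K_{n-1}$. First I would verify that $\dg(H)=n-2$: the clique contributes degeneracy $n-2$, and if $G$ already had degeneracy exceeding $n-2$ we could handle that trivially since $G$ has only $n$ vertices, so $\dg(G)\le n-1$; a short case check (or a preliminary assumption that $G$ is not itself a clique, which can be tested directly) pins $\dg(H)=n-2$. Then I claim $H$ has a path on $\dg(H)+2=n$ vertices if and only if $G$ has a Hamiltonian path. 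The backward direction is clear. For the forward direction, the key point is that any path in $H$ lies entirely within one component (paths are connected), so a path on $n$ vertices must live either in $K_{n-1}$ (impossible, only $n-1$ vertices) or in $G$, where it is a Hamiltonian path. This argument is what forces the disjointness of the union.

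For \probLC{} on \emph{connected} graphs with $k=2$, I would use the connected construction from the introduction: starting from a connected non-complete $G$ on $n$ vertices, form $H$ by identifying one vertex of $G$ with one vertex of $K_{n-1}$. Here the identified vertex is a cut vertex of $H$, so $H$ is connected but not $2$-connected, which is precisely why Theorem~\ref{thm:lc} does not apply. I would again check that $\dg(H)=n-2$ and argue that any cycle in $H$ is confined to a single block; the blocks are $G$ and $K_{n-1}$ glued at the cut vertex, so a cycle on $\dg(H)+2=n$ vertices cannot fit inside $K_{n-1}$ and hence must be a Hamiltonian cycle of $G$. Conversely a Hamiltonian cycle of $G$ is a cycle on $n=\dg(H)+2$ vertices in $H$.

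The main obstacle, such as it is, is not conceptual but bookkeeping: correctly establishing $\dg(H)=n-2$ in all cases and justifying that a long path/cycle cannot exploit both the clique and the original graph simultaneously. The cleanest way to dispatch the latter is the observation that a path is contained in a connected subgraph and a cycle is contained in a single block, so neither can straddle the two gadgets except through the (single) shared or absent connection; this confines any sufficiently long path or cycle to the copy of $G$. I would state explicitly that the reductions are polynomial-time (indeed linear in the size of $G$), completing both hardness proofs, and hence both \classNP-completeness claims.
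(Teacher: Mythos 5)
Your proposal is correct and follows essentially the same reductions as the paper: the disjoint union of $G$ with $K_{n-1}$ for \probLP and the two graphs identified at a single vertex for \probLC, both reducing from \textsc{Hamiltonian Path}/\textsc{Hamiltonian Cycle}. The additional bookkeeping you spell out (that $\dg(G')=n-2$ and that a long path or cycle is confined to a single component or block) is exactly the implicit content of the paper's argument.
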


\begin{proof}
To show that \probLP is \classNP-complete for $k=2$, consider the graph $G'$ that is a disjoint union of a non-complete graph $G$ with $n$ vertices and a copy the complete $(n-1)$-vertex graph $K_{n-1}$. Because $G$ is not a complete graph, $\dg(G')\leq n-2$. Therefore, $\dg(G')=n-2$, because $\dg(K_{n-1})=n-2$. Observe that $G'$ has a path with $\dg(G')+2=n$ vertices if and only if $G$ is Hamiltonian. Since \textsc{Hamiltonian Path} is a well-known \classNP-complete problem (see~\cite{GareyJ79}), the claim follows.

Similarly, for \probLC, consider $G'$ that is a union of a connected non-complete graph $G$ with $n$ vertices and $K_{n-1}$ with one common vertex. We have that $G'$ has a cycle with $\dg(G')+2=n$ vertices if and only if $G$ has a Hamiltonian cycle. Using the fact that \textsc{Hamiltonian Cycle} is \classNP-complete~\cite{GareyJ79}, we obtain that \probLC is \classNP-complete for connected graphs and $k=2$.
\end{proof}

Recall that a graph $G$ has a path with at least $\dg(G)+1$ vertices and if $\dg(G)\geq 2$, then $G$ has a cycle with at least $\dg(G)+1$ vertices. Moreover, such a path or cycle can be constructed in polynomial (linear) time. Hence, Proposition~\ref{prop:paraNPc} gives tight complexity bounds. Nevertheless, the construction used to show hardness for \probLP uses a disconnected graph, and the graph constructed to show hardness for \probLC has a cut vertex. Hence, it is natural to consider \probLP for connected graphs and \probLC  for 2-connected graphs. We show in Theorems~\ref{thm:lp} and \ref{thm:lc}
that these problems are \classFPT{} when parameterized by $k$ in these cases. Here, we observe that the lower bound $\dg(G)$ that is used for the parameterization is tight in the following sense.

\begin{proposition}\label{prop:tight}
For any $0<\varepsilon<1$, it is \classNP-complete to decide whether a connected graph $G$ contains a path with at least $(1+\varepsilon)\dg(G)$ vertices and it is \classNP-complete to decide whether a $2$-connected graph $G$ contains a cycle with at least $(1+\varepsilon)\dg(G)$ vertices.
\end{proposition}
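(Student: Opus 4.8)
The plan is to reduce from \textsc{Hamiltonian Path} (for the path statement) and \textsc{Hamiltonian Cycle} (for the cycle statement), amplifying the construction used in Proposition~\ref{prop:paraNPc} so that the additive gap of $2$ becomes a multiplicative gap of $(1+\varepsilon)$. The key idea is to pad the instance with a clique large enough that its degeneracy dominates, while keeping the ``interesting'' part (the graph whose Hamiltonicity we want to test) of relatively small size compared to $\dg(G)$.

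Let me describe the path case first. Given a non-complete $N$-vertex graph $G_0$ whose Hamiltonicity we wish to decide, I would pick an integer $t=t(N,\varepsilon)$ and form $G$ as the disjoint union of $G_0$ with a clique $K_t$, choosing $t$ large enough that $\dg(G)=t-1$ (this holds once $t-1\ge N-1$, i.e.\ $t\ge N$, since $G_0$ is non-complete and hence has degeneracy at most $N-2$). The longest path in $G$ either lies entirely in $K_t$, giving $t$ vertices, or lies in $G_0$, giving at most $N$ vertices. If $G_0$ has a Hamiltonian path, the longest path has $\max\{t,N\}=t$ vertices; if not, it still has $t$ vertices from the clique. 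This shows the naive union does \emph{not} separate the two cases, so instead I would \emph{attach} the clique to $G_0$ rather than take a disjoint union, so that a long path can traverse both pieces. Concretely: identify one vertex of $G_0$ with one vertex of $K_t$ (making $G$ connected), so a Hamiltonian path of $G_0$ can be concatenated with a Hamiltonian path of $K_t$ through the shared vertex, yielding a path on $N+t-1$ vertices, whereas without a Hamiltonian path in $G_0$ the longest path through the junction uses at most $N-1+t$ but cannot cover all of $G_0$, capping it at $t+(\text{longest path in }G_0)$. I would then choose $t$ so that the threshold $(1+\varepsilon)\dg(G)=(1+\varepsilon)(t-1)$ falls strictly between the ``no'' value and the ``yes'' value $N+t-1$: solving $(1+\varepsilon)(t-1)<N+t-1$ versus $(1+\varepsilon)(t-1)\ge t-1+(\text{max non-Hamiltonian path length})$ fixes $t=\Theta(N/\varepsilon)$, which is polynomial in $N$ for fixed $\varepsilon$.

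For the cycle case I would do the analogous construction keeping $G$ \emph{$2$-connected}: take a $2$-connected non-complete graph $G_0$ and glue $K_t$ to it along a \emph{shared edge} (or two shared vertices joined through internally disjoint connectors) so that the union remains $2$-connected and a Hamiltonian cycle of $G_0$ can be merged with a Hamiltonian cycle of $K_t$ into one long cycle. The arithmetic for choosing $t=\Theta(N/\varepsilon)$ is identical in spirit, separating the Hamiltonian value $\approx N+t$ from the non-Hamiltonian value by the multiplicative threshold $(1+\varepsilon)(t-1)$.

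The step I expect to be the main obstacle is the \emph{gap analysis}: I must verify that when $G_0$ is non-Hamiltonian, the longest path (resp.\ cycle) in the glued graph $G$ is genuinely bounded below $(1+\varepsilon)\dg(G)$, not merely that it fails to be the full Hamiltonian path. This requires arguing that any long path/cycle can cross the junction vertex (or shared edge) only a bounded number of times, so that the portion inside $G_0$ cannot secretly cover almost all of $G_0$ via a near-Hamiltonian path even when no exact Hamiltonian path exists. Controlling this ``almost-covering'' is delicate precisely because \textsc{Longest Path} is itself the hard object; the cleanest fix is to make the gap coarse by taking $\varepsilon$ to convert into an additive slack of size $\Theta(N)$ via the choice $t=\Theta(N/\varepsilon)$, so that the distinction reduces to the clean Hamiltonian-vs-not dichotomy rather than to fine path-length estimates inside $G_0$.
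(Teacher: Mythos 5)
Your high-level plan --- pad the instance with a clique of size $\Theta(N/\varepsilon)$ so that the clique dominates the degeneracy, and attach it so that a long path (cycle) can traverse both parts --- is the same as the paper's. But the step you yourself flag as the main obstacle, the gap analysis in the ``no'' case, is a genuine gap, and your proposed resolution does not work. No choice of clique size $t$ makes the yes/no gap ``coarse'': if the non-Hamiltonian $G_0$ happens to have a path covering all but one of its vertices and ending at the junction, the longest path in your glued graph has $N+t-2$ vertices versus $N+t-1$ in the yes case, so the gap is a single vertex no matter how $t$ is chosen. There is no ``additive slack of size $\Theta(N)$'' to be had, because bounding the longest path of a non-Hamiltonian graph strictly below $N-1$ is precisely the hard problem you are trying to avoid.

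The fix the paper uses is to calibrate the construction so that $\lceil(1+\varepsilon)\dg(G')\rceil$ equals $|V(G')|$ exactly; then a path of the required length is forced to be a Hamiltonian path of $G'$, and the unit gap suffices. Concretely, the paper takes a clique on $p=2\lceil N/\varepsilon\rceil$ vertices and appends a pendant path on $q=\lceil(1+\varepsilon)(p-1)-(N+p)\rceil$ extra vertices between $u_1$ and $u_2$, whose only role is to tune $|V(G')|$ to the threshold value. (Your version could instead be salvaged by choosing the integer $t-1$ in the interval $\bigl((N-1)/\varepsilon,\,N/\varepsilon\bigr]$, which is nonempty since its length is $1/\varepsilon>1$; but this argument is absent from your writeup.) A second, smaller issue: identifying a single vertex of $G_0$ with a vertex of $K_t$ (resp.\ gluing along a shared edge) makes Hamiltonicity of $G'$ equivalent to $G_0$ having a Hamiltonian path with a \emph{prescribed endpoint} (resp.\ a Hamiltonian path between two prescribed vertices), not to plain Hamiltonicity of $G_0$, since the junction is a cut vertex (resp.\ a $2$-cut). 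The paper sidesteps this by joining $u_1$ (resp.\ both $u_1$ and $u_2$) to \emph{every} vertex of $G_0$, and by reducing both the path and the cycle statement from \textsc{Hamiltonian Path}.
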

\begin{proof}
Let $0<\varepsilon<1$.  

First, we consider the problem about a path with $(1+\varepsilon)\dg(G)$ vertices. We reduce \textsc{Hamiltonian Path} that is well-known to be \classNP-complete (see~\cite{GareyJ79}). Let $G$ be a graph with $n\geq 2$ vertices. We construct the graph $G'$ as follows.
\begin{itemize}
\item Construct a copy of $G$.
\item Let $p=2\lceil\frac{n}{\varepsilon}\rceil$ and construct $p$ pairwise adjacent vertices $u_1,\ldots,u_p$.
\item For each $v\in V(G)$, construct an edge $vu_1$.
\item Let $q=\lceil (1+\varepsilon)(p-1)-(n+p)\rceil$. Construct vertices $w_1,\ldots,w_q$ and edges $u_1w_1$, $w_qu_2$ and 
$w_{i-1}w_i$ for $i\in\{2,\ldots,q\}$.
\end{itemize}
Notice that 
$q=\lceil (1+\varepsilon)(p-1)-(n+p)\rceil=\lceil 2\varepsilon\lceil\frac{n}{\varepsilon}\rceil-n-1-\varepsilon\rceil\geq \lceil n-1-\varepsilon\rceil\geq 1$ as $n\geq 2$. Observe also that $G$ is connected. 
We claim that $G$ has a Hamiltonian path if and only if $G'$ has a path with at least $(1+\varepsilon)\dg(G')$ vertices. 
Notice that $\dg(G')=p-1$ and $|V(G')|=n+p+q=\lceil (1+\varepsilon)\dg(G')\rceil$. Therefore, we have to show that $G$ has a Hamiltonian path if and only if $G'$ has a Hamiltonian path. 
Suppose that $G$ has a Hamiltonian path $P$ with an end-vertex $v$. Consider the path $Q=vu_1w_1\ldots w_qu_2u_3\ldots u_p$. Clearly, the concatenation of $P$ and $Q$ is a Hamiltonian path in $G'$. 
Suppose that $G'$ has a Hamiltonian path $P$. Since $u_1$ is a cut vertex of $G'$, we obtain that $P$ has a subpath that is a Hamiltonian path in $G$.

Consider now the problem about a cycle with at least $(1+\varepsilon)\dg(G)$ vertices. We again reduce \textsc{Hamiltonian Path} and the reduction is almost the same. Let $G$ be a graph with $n\geq 2$ vertices. We construct the graph $G'$ as follows.
\begin{itemize}
\item Construct a copy of $G$.
\item Let $p=2\lceil\frac{n}{\varepsilon}\rceil$ and construct $p$ pairwise adjacent vertices $u_1,\ldots,u_p$.
\item For each $v\in V(G)$, construct edges $vu_1$ and $vu_2$.
\item Let $q=\lceil (1+\varepsilon)(p-1)-(n+p)\rceil$. Construct vertices $w_1,\ldots,w_q$ and edges $u_2w_1$, $w_qu_3$ and 
$w_{i-1}w_i$ for $i\in\{2,\ldots,q\}$.
\end{itemize}
As before, we have that $q\geq 1$. Notice additionally that $p\geq 3$, i.e., the vertex $u_3$ exists. It is straightforward to see that $G'$ is 2-connected. 
We claim that $G$ has a Hamiltonian path if and only if $G'$ has a cycle with at least $(1+\varepsilon)\dg(G')$ vertices. 
We have that  $\dg(G')=p-1$ and $|V(G')|=\lceil(1+\varepsilon)\dg(G')\rceil$.
Hence,  we have to show that $G$ has a Hamiltonian path if and only if $G'$ has a Hamiltonian cycle. 
Suppose that $G$ has a Hamiltonian path $P$ with end-vertices $x$ and $y$. Consider the path $Q=xu_2w_1\ldots w_qu_3u_4\ldots u_py$. Clearly, $P$ and $Q$ together form a Hamiltonian cycle in $G'$. 
Suppose that $G'$ has a Hamiltonian cycle $C$. Since $\{u_1,u_2\}$ is a cut set of $G'$, we obtain that $C$ contains a path that is a Hamiltonian path of $G$.
\end{proof}

\section{Conclusion}\label{sec:concl}

We considered the lower bound $\dg(G)+1$ for the number of vertices in a longest path or cycle in a graph $G$. It would be interesting to consider the lower bounds given in Theorems~\ref{thm:circum} and \ref{thm:lp-mindeg}. More precisely, what can be said about the parameterized complexity of the variants of \textsc{Long Path (Cycle)}  where given a (2-connected) graph $G$ and $k\in {\mathbb N}$, the task is to check whether $G$ has a path (cycle) with at least $2\delta(G)+k$ vertices? Are these problems \classFPT{} when parameterized by $k$? It can be observed that the bound $2\delta(G)$ is ``tight''.
That is, for any $0<\varepsilon<1$, it is \classNP-complete to decide whether a connected (2-connected) $G$ has a path (cycle) with at least $(2+\varepsilon)\delta(G)$ vertices. See also~\cite{Schiermeyer95} for related hardness results.

\paragraph*{Acknowledgement}
We thank Nikolay Karpov for communicating to us the question of finding a path above the degeneracy bound and 
Proposition~\ref{prop:paraNPc}.

\end{document}